\newcommand{\Oh}{\mathcal{O}}
\newcommand{\Ohtilda}{\tilde{\Oh}}
\newcommand{\eps}{\varepsilon}
\newcommand{\mult}{\textsc{mult}}
\newcommand{\transform}[3]{#1+(#2,#3)}
\newcommand{\MBR}{\textsc{MBR}}
\newcommand{\rev}{\textsc{rev}}
\newcommand{\TT}{\mathcal{T}}
\newcommand{\NN}{\mathcal{N}}
\newcommand{\shape}[4]{\scriptstyle{\frac{#1|#2}{#3|#4}}}
\newcommand{\bigshape}[4]{{\frac{#1|#2}{#3|#4}}}
\newtheorem{theorem}{Theorem}[section]
\newtheorem{corollary}[theorem]{Corollary}
\newtheorem{lemma}[theorem]{Lemma}
\theoremstyle{definition}   
\newtheorem{definition}[theorem]{Definition}
\newtheorem{observation}[theorem]{Observation}
\newtheorem{conjecture}{Conjecture}
\newtheorem{question}{Question}
\renewcommand{\paragraph}{\subparagraph}
\title{Counting 4-Patterns in Permutations Is Equivalent to Counting 4-Cycles in Graphs}
\date{}	
\author[1]{Bart{\l}omiej Dudek}
\author[1]{Pawe{\l} Gawrychowski}
\affil[1]{Institute of Computer Science, University of Wroc{\l}aw, Poland}
\newcommand{\FIGURE}[4]{
\begin{figure}[#1]
\begin{centering}
 \includegraphics[width={#2}\textwidth]{{figures/#3}.pdf}
\caption{#4}
\label{fig:#3}
\end{centering}
\end{figure}
}
\begin{document}
\maketitle
  
\begin{abstract}
Permutation $\sigma$ appears in permutation $\pi$ if there exists a subsequence of $\pi$ that is order-isomorphic to $\sigma$.
The natural algorithmic question is to check if $\sigma$ appears in $\pi$, and if so count the number of occurrences.
Only since very recently we know that for any fixed length~$k$, we can check if a given pattern of length $k$ appears in a permutation of length $n$ in time linear in $n$, but being able to count all such occurrences in $f(k)\cdot n^{o(k/\log k)}$ time would refute the exponential time hypothesis (ETH).
Together with practical applications in statistics, this motivates a systematic study of the complexity of counting occurrences
for different patterns of fixed small length $k$.
We investigate this question for $k=4$.
Very recently, Even-Zohar and Leng [arXiv 2019] identified two types of $4$-patterns.
For the first type they designed an $\Ohtilda(n)$ time algorithm\footnote{$\Ohtilda(.)$ hides factors polylogarithmic in $n$.}, while for the second they were able to provide an $\Ohtilda(n^{1.5})$ time algorithm.
This brings up the question whether the permutations of the second type are inherently harder than the first type.

We establish a connection between counting 4-patterns of the second type and counting 4-cycles (not necessarily induced) in a sparse undirected graph.
By designing two-way reductions we show that the complexities of both problems are the same, up to polylogarithmic factors.
This allows us to leverage the work done on the latter to provide a reasonable argument for why there is a difference
in the complexities for counting 4-patterns of the first and the second type.
In particular, even for the seemingly simpler problem of detecting a 4-cycle in a graph on $m$ edges, the best known algorithm works
in $\Oh(m^{4/3})$ time. Our reductions imply that an $\Oh(n^{4/3-\eps})$ time algorithm for counting occurrences of any 4-pattern
of the second type in a permutation of length $n$ would imply an exciting breakthrough for counting (and hence also detecting) 4-cycles. In the other direction, by plugging
in the fastest known algorithm for counting 4-cycles, we obtain an algorithm for counting occurrences of any 4-pattern of the second type in $\Oh(n^{1.48})$ time.
\end{abstract}
 
\section{Introduction}\label{se:introduction}

Permutations are arguably the most basic combinatorial objects. A natural question in discrete mathematics is to
count permutations with certain properties, like consisting of a given number of cycles or having no fixed points.
A whole class of such questions is obtained by fixing a permutation $\sigma$, called the pattern, and defining
a permutation $\pi$ to avoid $\sigma$ if $\sigma$ is not a sub-permutation of $\pi$, or in other words
if $\pi$ does not contain a subsequence that is order-isomorphic to $\sigma$. For example, $21$
is avoided only by $12\ldots n$. Otherwise, we say that $\pi$ contains $\sigma$.
One of the first results concerning pattern avoidance is by Erd\H{o}s and Szekeres \cite{ErdosS35}, who proved
that every permutation of at least $(k-1)(\ell-1)+1$ elements contains either $12\cdots k$ or $\ell \cdots 21$.
Another classical result in pattern avoidance is due to Knuth~\cite{Knuth68}, who showed
that $\pi$ can be sorted by a stack if and only if $\pi$ avoids $231$.
Together with the systematic study of patterns in permutations by Simion and Schmidt~\cite{SimionS85},
this sparked an interest in counting and characterising permutations that avoid a given pattern (or multiple patterns).
A remarkable result in this area is by Marcus and Tardos~\cite{MarcusT04}, who showed that the number
of permutations of length $n$ avoiding $\sigma$ is bounded by $c(\sigma)^n$, where $c(\sigma)$ is a function independent of $n$.
This was conjectured in early 1990s independently by Stanley and Wilf.
For further discussion we refer the reader to surveys and textbooks~\cite{Vatter15,Bona12,Kitaev11}.

We approach pattern avoidance from an algorithmic perspective. We cannot hope for an efficient algorithm for arbitrary
patterns, as in general it is NP-hard to check if $\pi$ contains~$\sigma$~\cite{BoseBL98} when $\sigma$ is part of the input.
However, if we restrict our attention to patterns of length $k$, we might hope to check if a given permutation
on $n$ elements avoids such pattern faster than using the trivial algorithm in $\Oh(n^{k})$ time.
Indeed, Albert et al.~\cite{AlbertAAH01} and Ahal and Rabinovich~\cite{AhalR08} improved this complexity
to $\Oh(n^{2k/3+1})$ and $n^{0.47k+o(k)}$, respectively.
In a recent breakthrough result, Guillemot and Marx~\cite{GuillemotM14} developed a fixed-parameter tractable (FPT) algorithm that runs
in $2^{\Oh(k^2\log k)}\cdot n$ time.
Later, by refining the proof of Marcus and Tardos~\cite{MarcusT04}, Fox~\cite{Fox13} removed the $\log k$ factor
in the exponent to arrive at $2^{\Oh(k^2)}\cdot n$ complexity.
For $k\geq n/\log n$, $\Oh(1.79^n)$ and $\Oh(1.618^n)$ time algorithms are known~\cite{BrunerL16,BerendsohnKM19}.
Hence even though the problem is NP-hard, by now we have a range of efficient algorithms for different
special cases of checking pattern avoidance.

However, some applications bring the need to not only detect but also count occurrences of the pattern.
A basic example is calculating the so-called Kendall's $\tau$ correlation coefficient~\cite{Kendall38}, which
requires counting inversions. Generalizations of Kendall's test used in statistics require counting occurrences
of larger patterns.
Bergsma-Dassios~\cite{BergsmaD14} and Yanagimoto~\cite{Yanagimoto70}
used patterns of length 4 in their tests. Finally, patterns of length 5 appear in the Hoeffding's dependence
coefficient~\cite{Hoeffding48}. Also see Heller et al. \cite{HellerHKBG16} for a general family of such tests.
We refer the reader to \cite{Even-ZoharL19} for a more detailed description of the viewpoint of permutations in nonparametric statistics of bivariate data.
Unfortunately, hardly any of the aforementioned algorithms for detecting patterns generalize to counting.
A recent result by Berendsohn et al. \cite{BerendsohnKM19} shows that this is, in fact, inevitable,
as if patterns of length $k$ can be counted in $f(k)n^{o(k/\log k)}$ time then the exponential-time hypothesis fails.
This shows that we cannot hope for a general FPT algorithm, and considering the applications in statistics we should
focus on understanding the best possible exponent for small values of $k$.

Patterns of length $k$ can be trivially counted in $\Oh(n^k)$ time, which was improved by Albert et al.~\cite{AlbertAAH01}
to $\Oh(n^{2k/3+1})$ and then by Berendsohn et al.~\cite{BerendsohnKM19} to $\Oh(n^{k/4+o(k)})$ time.
However, it is clear that among all patterns of the same length $k$ some are easier to count than the others.
For example, occurrences of $12\cdots k$ can be easily counted in $\Ohtilda(nk)$ time using dynamic programming and range queries.
This motivates a systematic study of the complexity of counting occurrences of different patterns of fixed small length. 
For $k=2$, this is exactly the well-known exercise of counting inversions (or in other words, the pattern $21$) in a permutation (or its reverse), which
can be solved in $\Oh(n\log n)$ time with merge sort or in $\Oh(n\sqrt{\log n})$ in the Word RAM model~\cite{ChanP10}.
For $k=3$, all patterns can be counted in $\Ohtilda(n)$ time by using appropriate range counting structures.
For $k=4$, various algorithms were designed to compute efficiently the Bergsma-Dassios test, which asks about the value  $\tau^*=({\scriptstyle \#1234+ \#1243+ \#2134+ \#2143+ \#3412+ \#3421+ \#4312+ \#4321 })/ {n \choose 4} - \frac13$ \cite{BergsmaD14}.
First approaches brought the complexity down to $\Oh(n^2)$ \cite{WeihsDM18,WeihsDL16,HellerH16} and finally, very recently, Even-Zohar and Leng~\cite{Even-ZoharL19} observed that the patterns counted in this test possess some structural property that allows to design an $\Ohtilda(n)$ time algorithm.
For the remaining patterns of size 4, they obtained an algorithm working in $\Ohtilda(n^{1.5})$ time. 
Defining the $k$-profile of a permutation $\pi$ to be the sequence of $k!$ numbers with the number of occurrences for every possible pattern $\sigma$ of length $k$,
this brings us to the following natural open question:

\begin{question}[Even-Zohar and Leng \cite{Even-ZoharL19}]
\label{q:complexity}
What is the computational complexity of finding the full 4-profile of a given permutation of length $n$?
\end{question}

In fact, Even-Zohar and Leng~\cite{Even-ZoharL19} showed that among all the twenty-four 4-patterns, there are eight that can be counted in $\Ohtilda(n)$ time, while the remaining ones can be counted in $\Ohtilda(n^{1.5})$ time.
Additionally, they showed that all patterns of the second type are equivalent in terms of computational complexity, that is after counting one of them, we can retrieve all the other in $\Ohtilda(n)$ time.
These two types in fact coincide with the notion of concordant and discordant patterns as defined by Bergsma and Dassios~\cite{BergsmaD14}.
Using the notation of Fox \cite{Fox13}, the permutation matrix of patterns of the second type contains $J_2$ as an interval minor.
This raises the challenge of finding a reason why some 4-patterns seem harder to count than the others.

\begin{question}\label{q:harder}
Why some 4-patterns seem more difficult to count than the others?
\end{question}

\paragraph{Related work.}
Many efforts have been devoted to understand which patterns are more difficult to
detect~\cite{AlbertLLV16,BoseBL98,AlbertAAH01,Ibarra97,YugandharS05,GuillemotV09}.
Recently Jel{\'{\i}}nek and Kyn\v{c}l \cite{JelinekK17} established that it is possible to detect $\sigma$ in polynomial time
if $\sigma$ avoids $\alpha$, for $\alpha\in\{1,12,21,132,213,231,312\}$ and NP-complete otherwise. This was later
strengthened by Berendsohn et al.~\cite{BerendsohnKM19} by considering treewidth of the incidence graph of $\sigma$.
Even though the problem is NP-hard in general, more efficient algorithms are known for many families of patterns, such as
vincular~\cite{BabsonS00}, bivincular~\cite{Bousquet-MelouCDK10}, mesh~\cite{BrandenC11}, boxed mesh~\cite{AvgustinovichKV13}
and consecutive~\cite{ElizaldeN03}. See the survey by Bruner and Lackner~\cite{BrunerL13} for a more detailed description of these variants.

\paragraph{Fine-grained complexity.}
Although the traditional notion of ``easy'' and ``hard'' problems is defined with respect to the polynomial time solvability, in the last two decades commenced the study on ``fine-grained'' theory which tries to understand relationships between polynomial-time solvable problems.
They can be employed to state conditional lower bounds based on one of a few believable conjectures on complexities of some basic problems,
such as SETH, APSP, or 3SUM.
See a recent survey by Vassilevska Williams~\cite{Williams14} for a summary.

\paragraph{Counting short cycles in graphs.}
Similarly as for permutation patterns, a natural question is to detect or count small substructures of a graph,
with perhaps the most fundamental example being counting cycles of particular length.
Already the smallest case, triangle, is highly non-trivial to count, as the fastest known approach for a $n$-node
graph runs in $\Oh(n^{\omega})=\Oh(n^{2.38})$ using fast matrix multiplication algorithm~\cite{Gall14a,Williams12}.

Surprisingly, Vassilevska Williams and Williams~\cite{WilliamsW18} proved that this is essentially inevitable, as the
two problems are, in a certain sense, equivalent: a practical advance for detecting triangles would imply a practical
algorithm for Boolean matrix multiplication.
As in many applications the graphs are sparse, it is desirable to design algorithm with running time depending on
the number of edges $m$. Alon et al.~\cite{AlonYZ97} developed an $\Oh(m^{2\omega/(\omega+1)})=\Oh(m^{1.41})$
time algorithm for counting triangles (in fact their algorithm is stated for finding a single triangle, but can be easily extended).
Going one step further, 4-cycles can also be counted in $\Oh(n^{\omega})$ time~\cite{AlonYZ97}, but the fastest known
counting algorithm for sparse graphs runs in $\Oh(m^{1.48})$ time~\cite{WilliamsWWY15}.
Interestingly, one can {\it find} a $2k$-cycle, for any constant $k\geq 2$, in $\Oh(n^{2})$ time~\cite{YusterZ97}.
If the graph is given as an adjacency matrix, this is clearly optimal,
but it seems plausible to conjecture that this is also the case if the graph is given as adjacency lists.

\begin{conjecture}[Yuster and Zwick~\cite{YusterZ97}]\label{conj:no_n2-eps}
 For every $\eps>0$, there is no algorithm that detects 4-cycles in a graph on $n$ nodes in~$\Oh(n^{2-\eps})$ time.
\end{conjecture}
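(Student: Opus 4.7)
The statement is recorded as a conjecture of Yuster and Zwick and, to my knowledge, remains open; an unconditional bound of this form would in particular give a nontrivial super-input-size lower bound on sparse graphs for a natural polynomial-time problem. Any realistic plan must therefore aim at a conditional lower bound, and the most natural target hypothesis is the Orthogonal Vectors (OV) conjecture, which under SETH asserts that detecting an orthogonal pair among $n$ Boolean vectors of dimension $d=\omega(\log n)$ requires $n^{2-o(1)}$ time.

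The plan would be to design a reduction that encodes an OV instance as a graph on $\Oh(n)$ vertices whose 4-cycles correspond to orthogonal pairs. A first try places the $n$ vectors on one side and the $d$ coordinates on the other, joining each vector to the coordinates at which it is $1$; a 4-cycle between two vectors then witnesses two common $1$-coordinates, i.e.\ inner product at least $2$, rather than orthogonality. Turning this ``shared witness'' statement into a genuine orthogonality test would presumably require a complementation or product gadget that converts the \emph{absence} of shared $1$-entries into the \emph{presence} of a 4-cycle, for example via a Ramsey-style colouring of coordinate pairs, or by composing two OV instances (one positive, one negative) through auxiliary vertices so that a 4-cycle is forced exactly when no $1$-coordinate is shared.

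The main obstacle is the tightness of the target exponent. A 4-cycle encodes only a $2$-fold conjunction (two shared neighbours), while OV is a $d$-fold conjunction across coordinates, so any reduction must pay somewhere for the extra dimensions; since the gap between the trivial $\Oh(n^2)$ upper bound for 4-cycle detection and the target $n^{2-\eps}$ is only polynomial in $n$, one cannot afford any polynomial blow-up in the vertex count, nor more than a polylogarithmic loss in the OV dimension. This rigidity is presumably why the conjecture has resisted proof, and is why the remainder of this paper takes it as a working hypothesis for deriving conditional lower bounds on 4-pattern counting rather than attempting to establish it.
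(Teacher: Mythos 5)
This statement is a conjecture, not a theorem: the paper does not prove it, it simply cites it from Yuster and Zwick and uses it (together with Conjecture~\ref{conj:no_m4/3-eps}) as a working hypothesis to derive conditional lower bounds. You correctly recognize this, and correctly decline to offer a proof. Your discussion of a hypothetical OV/SETH-based reduction and why it breaks down (a 4-cycle certifies a $2$-fold conjunction, while orthogonality is a $d$-fold conjunction, and there is no polynomial slack to absorb the dimension) is a sensible, if informal, account of why the conjecture is hard to settle unconditionally or even conditionally; but since the paper itself asserts nothing here beyond the conjecture's statement and attribution, there is nothing to check your reasoning against, and nothing you wrote conflicts with the paper. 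In short, you have correctly identified that no proof exists or is given, which is the only fully correct response to this prompt.
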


The best known algorithm for finding a 4-cycle in a sparse graph runs in $\Oh(m^{4/3})$ time~\cite{AlonYZ97}.
This was recently extended by Dahlgaard et al.~\cite{DahlgaardKS17} who showed how to find a $2k$-cycle in
$\Oh(m^{2k/(k+1)})$ time.
Furthermore, they showed that this is in fact optimal, assuming Conjecture~\ref{conj:no_n2-eps} and using a general combinatorial result of Bondy and Simonovits that a graph with $m=100kn^{1+1/k}$ edges must contain a $2k$-cycle \cite{BondyS74}. 
See also Abboud and Vassilevska Williams~\cite{AbboudW14} for a similar conjecture on the complexity of detecting a 3-cycle.

\begin{conjecture}[Dahlgaard, Knudsen and St{\"{o}}ckel~\cite{DahlgaardKS17}]\label{conj:no_m4/3-eps}
For every $\eps>0$, there is no algorithm that detects a 4-cycle in a graph with $m$ edges in~$\Oh(m^{4/3-\eps})$ time.
\end{conjecture}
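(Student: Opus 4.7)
The plan is to establish Conjecture~\ref{conj:no_m4/3-eps} conditionally on Conjecture~\ref{conj:no_n2-eps}, since proving a super-linear lower bound for a decision problem in $\mathbf{P}$ unconditionally is far beyond current techniques. The key combinatorial ingredient is the Bondy--Simonovits extremal bound (already cited in the excerpt) specialized to $k=2$: every graph on $n$ vertices with more than $200\,n^{3/2}$ edges contains a 4-cycle.

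Suppose, for contradiction, that an algorithm $A$ detects a 4-cycle in a graph with $m$ edges in $\Oh(m^{4/3-\eps})$ time for some fixed $\eps>0$. I would build from it an algorithm for 4-cycle detection on an arbitrary $n$-vertex graph $G$ (given as adjacency lists) as follows. First count the edges of $G$, stopping the count as soon as $200\,n^{3/2}$ is hit; this preprocessing takes $\Oh(n^{3/2})$ time. If the threshold is reached, output ``yes'' by Bondy--Simonovits. Otherwise $m < 200\,n^{3/2}$, and invoking $A$ on $G$ costs
\[
\Oh(m^{4/3-\eps}) \;=\; \Oh\bigl((n^{3/2})^{4/3-\eps}\bigr) \;=\; \Oh(n^{2-3\eps/2}).
\]
Combined with the preprocessing, the total running time is $\Oh(n^{2-\eps'})$ with $\eps':=\min(1/2,\,3\eps/2)>0$, contradicting Conjecture~\ref{conj:no_n2-eps}.

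The main obstacle is, of course, that an unconditional proof is out of reach: one cannot currently rule out even linear-time algorithms for most natural decision problems in $\mathbf{P}$. Conditionally on Conjecture~\ref{conj:no_n2-eps}, however, the reduction above is essentially tight, because the Bondy--Simonovits exponent $3/2$ is precisely what calibrates the two conjectured thresholds via $(n^{3/2})^{4/3}=n^2$. Any improvement on the sparse side would immediately yield a matching improvement on the dense side, so the two conjectures can be viewed as morally equivalent, and the only potential slack lies in (unlikely) quantitative refinements of the extremal bound itself.
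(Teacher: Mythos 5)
The statement you were asked about is labelled a \emph{conjecture}, and the paper does not (and cannot) prove it: it is an open hardness assumption cited from Dahlgaard, Knudsen and St\"ockel, which the authors then rely on to argue conditional lower bounds. So there is no ``paper's own proof'' to compare against. What the paper \emph{does} say, in the paragraph immediately preceding the conjecture, is that Dahlgaard et al.\ showed their $\Oh(m^{2k/(k+1)})$ algorithm to be optimal ``assuming Conjecture~\ref{conj:no_n2-eps} and using a general combinatorial result of Bondy and Simonovits.'' Your proposal is exactly that argument, specialized to $k=2$: if $m>200\,n^{3/2}$, answer yes by Bondy--Simonovits; otherwise $m=\Oh(n^{3/2})$ and an $\Oh(m^{4/3-\eps})$ detector would run in $\Oh(n^{2-3\eps/2})$ time, contradicting Conjecture~\ref{conj:no_n2-eps}. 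The reduction, the threshold, and the exponent arithmetic are all correct, and you correctly flag that the result is conditional rather than unconditional.

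The only thing to be precise about is the status of what you have shown. You have not proved Conjecture~\ref{conj:no_m4/3-eps}; you have proved the implication ``Conjecture~\ref{conj:no_n2-eps} $\Rightarrow$ Conjecture~\ref{conj:no_m4/3-eps}.'' That is a genuinely useful fact (and is the reason the authors feel comfortable basing their lower bound on the sparse conjecture), but it is a one-directional implication, not an equivalence. Your closing remark that the two conjectures are ``morally equivalent'' overstates this slightly: the reduction you give only transfers hardness from the dense regime to the sparse regime, and there is no known converse showing that a fast dense-graph algorithm would follow from a fast sparse-graph algorithm. So it would be cleaner to state the conclusion as: Conjecture~\ref{conj:no_m4/3-eps} is at least as plausible as Conjecture~\ref{conj:no_n2-eps}, and is the appropriate assumption when the input is parameterized by $m$ rather than $n$.
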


Dudek and Gawrychowski \cite{DudekG19} recently used this conjecture to provide an explanation for why there is
no $\Ohtilda(n)$ time algorithm for computing the so-called quartet distance between two trees on $n$ nodes.
Very recently Duraj et al.  developed an equivalence class between range query problems and detecting triangles in sparse graphs~\cite{DurajKPW20}.

\paragraph{Our contribution.}
As in the previous works we divide the patterns into two types and we call them trivial and non-trivial respectively.
Our main contribution is a two-way reduction between counting occurrences of a non-trivial pattern and counting 4-cycles in an undirected sparse graph.
This provides a reasonable answer for Question~\ref{q:harder}, as any $\Ohtilda(n)$ time algorithm for such patterns would imply an exciting
breakthrough for counting 4-cycles,
and confirms that the two types of 4-patterns identified in the previous work are inherently different.

We partially answer Question~\ref{q:complexity} about the exact complexity of computing 4-profile of permutation of length $n$.
Our two-way reductions imply that, by plugging in the asymptotically faster known algorithm for counting 4-cycles
in a sparse graph~\cite{WilliamsWWY15}, we are able to compute the full 4-profile of a permutation of length $n$
in $\Oh(n^{1.48})$ time. In the other direction, we argue that an $\Oh(n^{4/3-\eps})$ time algorithm is unlikely,
as long as one is willing to believe Conjecture~\ref{conj:no_m4/3-eps}.

Our reductions are summarised in Figure~\ref{fig:reductions}.
A corollary from these reductions is an alternative proof for the equivalence between the non-trivial patterns, which avoids using the notion of corner tree formulas and a computer-aided argument used in~\cite{Even-ZoharL19}.

\begin{figure}[b]

\newcommand{\myproblem}[1]{\large\textbf{#1}}
\newcommand{\mydesc}[1]{\small{Section {#1}}}
\newcommand{\myedgeshort}{edge[bend left=10]}
\newcommand{\myedgelong}{edge[bend left=15]}
\newcommand{\mynode}[2]{\node (#1) at #2 {#1}}
\newcommand{\mynodedisp}[3]{\node [align=left] (#1) at #2 {#3}}

\newcommand{\lem}[1]{\tiny{Lemma #1}}

    \centering
    \begin{tikzpicture}
    \begin{scope}[every node/.style={},
                  every edge/.style={draw=black,very thick}]
        
        \mynode{pattern}{(0,0)};
        \mynodedisp{4-partite pattern}{(3,0)}{4-partite\\pattern};
        \mynodedisp{4-circle-layered}{(7.5,0)}{4-circle-layered\\graph};
        \mynodedisp{undirected}{(12,0)}{undirected\\graph};
        \mynodedisp{4-circle-layered multigraph}{(5.25,-2)}{4-circle-layered\\multigraph};
        \mynodedisp{directed}{(9.75,-2)}{directed\\graph};
        
        \node (only lemma) at (9.75,-0.6) {\lem{\ref{le:undirected-and-directed-4-partite}}};
        
        \path [->] (pattern)   edge[bend left=20] node[above]  	{\lem{\ref{le:pattern-2-4-partite-pattern}}} (4-partite pattern);
        
        \path [->] (4-partite pattern)   edge[bend left=20] node[below]  	{\lem{\ref{le:4-partite-pattern-2-pattern}}} (pattern);
        
        \path [->] (4-partite pattern)   edge[bend right=20] node[left, near end]  	{\lem{\ref{le:pattern-2-cycle}}} (4-circle-layered multigraph);
        
        \path [->] (4-circle-layered multigraph)   edge[bend right=20] node[right, near start]  	{\lem{ \ref{le:dir_multi_to_simple}}} (4-circle-layered);
        
        \path [->] (4-circle-layered)   edge[bend right=20] node[above]  	{\lem{\ref{le:cycle-2-pattern}}} (4-partite pattern);
        
        \path [->] (4-circle-layered)   edge[bend left=20] node[right]  	{} (undirected);
        
        \path [->] (undirected)   edge[bend left=20] node[right]  	{} (directed);
        
        \path [->] (directed)   edge[bend left=20] node[right]  	{} (4-circle-layered);
        
    \end{scope}
    \end{tikzpicture}
    \caption{Sequence of reductions used to prove the equivalence between counting non-trivial 4-patterns and 4-cycles.
The right part of the figure describes different kinds of graphs in which we count 4-cycles.
}
    \label{fig:reductions}
\end{figure}

\begin{theorem}
 An algorithm for counting 4-cycles in a graph on $m$ edges in $\Ohtilda(m^\gamma)$ time implies an algorithm for counting non-trivial patterns in a permutation of length $n$ in $\Ohtilda(n^\gamma)$ time and vice versa.
\end{theorem}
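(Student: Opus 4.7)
The plan is to compose the chain of reductions summarised in Figure~\ref{fig:reductions}, ensuring that each step preserves the input size up to polylogarithmic factors so that a bound $\Ohtilda(m^\gamma)$ for counting 4-cycles translates into $\Ohtilda(n^\gamma)$ for counting non-trivial 4-patterns and vice versa. Since the previous discussion already explains that all non-trivial patterns are equivalent in complexity (and this will also follow from our reductions), I can fix an arbitrary non-trivial pattern $\sigma$ of length $4$ and work only with it.

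For the algorithmic direction (a 4-cycle algorithm yields a pattern-counting algorithm), I would start from a permutation $\pi$ of length $n$ and chain the upper part of the diagram: Lemma~\ref{le:pattern-2-4-partite-pattern} reduces counting occurrences of $\sigma$ in $\pi$ to an $\Oh(1)$ number of 4-partite pattern instances of total size $\Oh(n)$; Lemma~\ref{le:pattern-2-cycle} then converts each such instance into counting 4-cycles in a 4-circle-layered multigraph on $\Oh(n)$ edges; Lemma~\ref{le:dir_multi_to_simple} turns the multigraph into a simple 4-circle-layered graph while preserving, up to an easily computed correction, the number of 4-cycles and the edge count; finally this is viewed as an undirected graph on $\Oh(n)$ edges and fed to the assumed $\Ohtilda(m^\gamma)$ algorithm.

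For the reverse direction (a pattern-counting algorithm yields a 4-cycle algorithm), I would traverse the diagram in the opposite direction. Starting from an arbitrary undirected graph $G$ on $m$ edges, I would orient and reshape it into a 4-circle-layered graph via Lemma~\ref{le:undirected-and-directed-4-partite} while bounding the blowup in the number of edges and accounting for how each undirected 4-cycle is represented under the layering; then apply Lemma~\ref{le:cycle-2-pattern} to get a 4-partite pattern instance on $\Oh(m)$ points, and Lemma~\ref{le:4-partite-pattern-2-pattern} to fold it into an ordinary permutation of length $\Oh(m)$. Running the assumed $\Ohtilda(n^\gamma)$ algorithm and combining the counts yields the 4-cycle count of $G$ in $\Ohtilda(m^\gamma)$ time.

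The main obstacle will be the book-keeping that keeps the size parameters $n$ and $m$ linearly related throughout the chain: in particular, the multigraph-to-simple step (Lemma~\ref{le:dir_multi_to_simple}) must absorb edge multiplicities that could be as large as $\Theta(n)$ without letting the edge count exceed $\Ohtilda(n)$, and the undirected-to-layered step must correctly partition the 4-cycles of $G$ across the constantly many layered orientations so that no cycle is missed or over-counted. Once these two structural reductions are in place, everything else is a telescoping combination of lemmas that were already stated, and the final running time follows by plugging $m = \Theta(n)$ (respectively $n = \Theta(m)$) into the assumed bound.
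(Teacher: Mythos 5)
Your proof takes exactly the approach the paper intends: the theorem is stated immediately after Figure~\ref{fig:reductions} with no separate proof, because it follows by chaining Lemmas~\ref{le:pattern-2-4-partite-pattern}, \ref{le:pattern-2-cycle}, \ref{le:dir_multi_to_simple}, and \ref{le:undirected-and-directed-4-partite} in one direction and Lemmas~\ref{le:undirected-and-directed-4-partite}, \ref{le:cycle-2-pattern}, and \ref{le:4-partite-pattern-2-pattern} in the other, which is precisely what you do. One small inaccuracy worth fixing: Lemma~\ref{le:pattern-2-4-partite-pattern} produces $\Ohtilda(n)$ (not $\Oh(1)$) 4-partite instances of total size $\Ohtilda(n)$ (not $\Oh(n)$), so the final running-time bound relies on superadditivity of $m\mapsto m^\gamma$ for $\gamma\geq 1$ (i.e.\ $\sum_i n_i^\gamma \leq (\sum_i n_i)^\gamma$) together with the $\Ohtilda$ swallowing the polylogarithmic blow-ups from the base-range decomposition and from Lemma~\ref{le:dir_multi_to_simple}; with that correction the argument is complete.
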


We can plug in the fastest known algorithm for counting 4-cycles that runs in $\Oh(m^{\frac{4\omega-1}{2\omega+1}})=\Oh(m^{2-\frac{3}{2\omega+1}})$ time \cite{WilliamsWWY15}.
As $\omega<2.373$ \cite{Gall14a,Williams12}, we obtain a more efficient algorithm for computing the full 4-profile 
in  $\Oh(n^{1.48})$ time.

\begin{corollary}
 For every $\eps>0$, there exists no algorithm that can count non-trivial 4-patterns in permutation of length $n$ in $\Oh(n^{4/3-\eps})$ time unless
 Conjecture~\ref{conj:no_m4/3-eps} is false.
\end{corollary}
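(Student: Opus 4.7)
The plan is a direct contrapositive argument that bootstraps off the equivalence theorem stated just above. Suppose, toward a contradiction, that for some $\eps>0$ we had an algorithm counting every non-trivial 4-pattern in a permutation of length $n$ in $\Oh(n^{4/3-\eps})$ time. Our goal is to translate this into an algorithm that \emph{detects} a 4-cycle in a graph on $m$ edges faster than $\Oh(m^{4/3-\eps'})$ for some $\eps'>0$, which directly contradicts Conjecture~\ref{conj:no_m4/3-eps}.

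The first step is to invoke the theorem just above in the direction ``counting non-trivial 4-patterns $\Rightarrow$ counting 4-cycles'', namely the chain of reductions along the bottom of Figure~\ref{fig:reductions}. Given an input graph $G$ with $m$ edges, this chain transforms $G$ (through the intermediate models of undirected graph, directed graph, 4-circle-layered multigraph, 4-circle-layered graph, and 4-partite pattern) into a permutation whose length is $\Ohtilda(m)$, such that the number of 4-cycles in $G$ can be recovered in $\Ohtilda(m)$ time from the counts of non-trivial 4-patterns in that permutation. Plugging our hypothetical pattern-counting algorithm into this reduction yields an algorithm that counts 4-cycles in $G$ in time $\Ohtilda(m^{4/3-\eps})$.

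The second step is to downgrade counting to detection: detecting whether $G$ contains a 4-cycle is clearly done by testing whether the count is nonzero, so we obtain a 4-cycle \emph{detection} algorithm running in $\Ohtilda(m^{4/3-\eps})$ time. Finally, we absorb the polylogarithmic overhead: since $\log^{c} m = \Oh(m^{\eps/2})$ for any $c$ and sufficiently large $m$, our running time is $\Oh(m^{4/3-\eps/2})$, which contradicts Conjecture~\ref{conj:no_m4/3-eps} with $\eps' = \eps/2$.

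There is essentially no real obstacle here beyond verifying that the reduction provided by the main theorem preserves the relevant parameter polynomially, which is exactly what the theorem guarantees (it maps $m$-edge graphs to permutations of length $\Ohtilda(m)$ and vice versa), and being careful that $\eps$ is strictly positive so the polylogarithmic slack can be swallowed into a slightly worse but still sub-$4/3$ exponent. No new combinatorial or algorithmic ingredient is needed; the corollary is purely a consequence of the equivalence theorem together with the trivial observation that counting is at least as hard as detection.
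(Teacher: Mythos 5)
Your argument is correct and mirrors exactly the reasoning the paper intends: invoke the equivalence theorem in the patterns-to-cycles direction, note that counting trivially subsumes detection, and absorb the polylogarithmic overhead into a slightly smaller exponent loss to contradict Conjecture~\ref{conj:no_m4/3-eps}. The paper gives no explicit proof of this corollary beyond the surrounding discussion, and your write-out is precisely the missing verification, including the small but necessary care about polylog factors.
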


We stress that even though we use Conjecture~\ref{conj:no_m4/3-eps} about detecting 4-cycles, the reduction proceeds by
creating multiple instances and subtracting some of the obtained result. Hence, it does not imply anything about the complexity
of detecting 4-patterns, and in fact for this problem Guillemot and Marx \cite{GuillemotM14} showed an $\Oh(n)$ time algorithm.

\paragraph{Overview of the methods.}
Most of our reductions exploit the additional structure of pattern occurrences in the plane which is divided by a horizontal and a vertical line.
We group the occurrences by shapes corresponding to the number of points in each quadrant and count them separately.
It turns out that the hard case is when the four points are all in distinct quadrants. This is the heart of our main reductions
between counting patterns and 4-cycles.
All other shapes can be counted in almost linear time with a careful application of range queries.
To simplify the presentation, we split the reductions into many steps, between different classes of graphs and patterns so as to
work with 4-partite patterns and graphs which have more structure for our application.
Our reductions are based on the divide and conquer paradigm, applied to each of the four half-planes separately.
We present them using Minimum Base Ranges corresponding to nodes of the full binary tree on $n$ leaves.

Our reduction from counting 4-cycles to counting 4-patterns uses somewhat similar techniques to Berendsohn et al. \cite{BerendsohnKM19}.
However, their approach works for arbitrary subgraphs on $k$ nodes, which comes at a cost of increasing the size of permutation pattern and in our case would result in a pattern of 29 elements.
This would not give us the desired connection between counting 4-cycles and 4-patterns, so we need a new argument tailored for 4-cycles.

\section{Preliminaries}

Permutation $\pi$ of length $n$ is a bijective mapping $\pi: [n]\rightarrow [n]$, where $[n]=\{1,\ldots, n\}$ and a $k$-pattern $\sigma$ is a permutation of length $k$.
A permutation $\pi$ contains a $k$-pattern $\sigma$ if there exist indices $1\leq i_1<i_2<\ldots<i_k\leq n$ such that $\sigma(j)<\sigma(j')$ iff $\pi(i_j)<\pi(i_{j'})$ for distinct $j,j'\in[k]$.
A sequence of $k$ increasing indices with the above properties is called an \textit{occurrence} of $\sigma$ in $\pi$.
For example, in permutation $524\underline{6}\underline{1}7\underline{3}$ the underlined positions $4,5$ and $7$ form an occurrence of pattern $312$.
By counting a $k$-pattern in a permutation we mean counting occurrences of the pattern.
Unless stated otherwise, a pattern refers to a 4-pattern.

\paragraph{Shapes.}
We represent permutation $\pi$ as a set of points in the plane: $S_\pi=\{(i,\pi(i)): i\in [n]\}$ and we interchangeably use points and their corresponding elements from the permutation.
For instance, four points $\{(i_j,\pi(i_j)):j\in [4]\}$ are an occurrence of pattern $\sigma$ iff positions $i_1<\ldots<i_4$ are an occurrence of $\sigma$ in $\pi$.
We say that a horizontal line divides a plane into top and bottom part and vertical line divides into left and right part.
\textit{Division} of a plane with both horizontal and vertical line splits the points from $S_\pi$ into four \textit{regions} and we abbreviately denote each of them by capital letters denoting horizontal and vertical location of the region: TL,TR,BL or BR.
Slightly abusing the notation, by a region  we mean either the region or the set of points from $S_\pi$ that belong to the region, with the appropriate order between them.
Returning to the correspondence between the elements of $\pi$ and $S_\pi$, notice that the division of the plane with horizontal line $y=h$ and vertical line $x=v$ also partitions elements from $\pi$ into four groups, for instance $(i,\pi(i))\in$ TL iff $i<v \wedge \pi(i)>h$.
We will only consider such divisions of the plane that the dividing lines never pass through a point from $S_\pi$.

Given a division of the plane, we say that an occurrence of pattern $\sigma$ forms \textit{shape} $\shape abcd$ if among the 4 points, there are respectively $a,b,c$ and $d$ points in top-left, top-right, bottom-left and bottom-right region of the plane. 
By counting a particular shape for a division we mean counting the number of quadruples of points forming the shape with appropriate number of points in each of the regions.
Note that one pattern may form multiple shapes, i.e. $\shape1102,\shape1201,\shape2020$ or $\shape1111$, depending on the pattern and the position of the dividing lines.
However, some shapes cannot be formed by all patterns, no matter how we divide the plane, i.e. $\shape 1111$ can be formed by $2314$, but not by $2134$, and similarly (but the opposite) for $\shape 0220$.
As we can always reflect points in the plane over a dividing line, while discussing a shape we will not mention other shapes obtained by a sequence of such operations, because all such shapes can be counted in exactly the same way.
For instance $\shape3001,\shape0310,\shape1003$ and $\shape0130$ are all rotations of the same shape, but $\shape3010$ is not.
To sum up, there are the following possible shapes: $\shape4000,\shape3001,\shape3010,\shape2002,\shape2020,\shape1102,\shape1201,\shape1111$ and all their rotations.
We call shapes $\shape4000,\shape3100,\shape2020$ and their rotations \textit{non-proper}, because the division does not split the pattern both horizontally and vertically.
All other shapes are called \textit{proper}.
Now we are ready to state the crucial property that distinguishes two main groups of patterns:

\begin{definition}
A pattern that can form the shape $\shape 1111$ is called \textit{non-trivial}, and all other patterns are called \textit{trivial}.
\end{definition}

Notice that there are 8 trivial patterns: $1234,1243,2134,2143,4321,4312,3421,3412$, all other patterns are \textit{non-trivial}. 
All trivial patterns can form $\shape 0220$ (or its reflection $\shape 2002$), which cannot be formed by non-trivial patterns.
For a particular division of the plane, we say that an occurrence of a 4-pattern $\sigma$ is \textit{4-partite} if all its points belong to pairwise distinct regions, that is they form the shape $\shape 1111$.
To simplify notation, by counting 4-partite pattern $\sigma_4$ we mean counting 4-partite occurrences of the pattern $\sigma$.
Clearly, only non-trivial 4-patterns can be 4-partite.
We denote $\#_\sigma(P)$ as the number of occurrences of pattern $\sigma$ among the points from~$P$.
For a 4-partite pattern $\sigma_4$, we slightly abuse the notation and by $\#_{\sigma_4}\left(\shape{TL}{TR}{BL}{BR}\right)$ we denote the number of 4-partite occurrences of the pattern $\sigma_4$ in the plane divided into 4 regions: $TL,TR,BL,BR$.

\paragraph{MBRs.}
Let $\TT_n$ be a full binary tree with $n'=2^{\lceil\log n\rceil}$ leaves numbered from $1$ to $n'$ and with internal nodes corresponding to the range of indices of leaves from their subtrees.
We call the ranges corresponding to the nodes in the tree \textit{base ranges}.
Clearly, any number from $[n']$ is contained in $\log n'=\Oh(\log n)$ base ranges.
For a subset $S\subseteq [n]$, we define its \textit{minimum base range} $\MBR(S)$ as the smallest base range from $\TT_n$ containing all elements from~$S$.
Notice that it is the lowest common ancestor (LCA) of all leaves corresponding to the elements from~$S$.

We construct the full binary tree $\TT_n$ separately for $x$- and $y$-coordinates of points from $S_\pi$ and consider the Cartesian product $\TT_n\times \TT_n$ of the trees.
For every pair $(R_x,R_y)\in \TT_n \times \TT_n$ of ranges, let $P_{R_x,R_y}=\{(i,\pi(i))\in S_\pi: i\in R_x \wedge \pi(i)\in R_y\}$ be the set of points from $S_\pi$ with their coordinates in appropriate ranges.
We call a pair $(R_x,R_y)$ \textit{relevant} if its set $P_{R_x,R_y}$ is non-empty.
As every number belongs to $\Oh(\log n)$ base ranges, every point belongs to $\Oh(\log^2n)$ sets $P_{R_x,R_y}$ and hence we have:
\begin{observation}\label{obs:relevant_pairs}
There are $\Oh(n\log^2n)$ relevant pairs of ranges.
\end{observation}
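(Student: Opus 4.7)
The plan is to prove the bound by a straightforward double-counting argument over incidences between points of $S_\pi$ and pairs of base ranges.

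First I would fix a point $(i,\pi(i))\in S_\pi$ and count exactly the number of pairs $(R_x,R_y)\in \TT_n\times\TT_n$ whose associated set $P_{R_x,R_y}$ contains this point. By the definition of base ranges, the ranges $R_x$ with $i\in R_x$ are precisely the base ranges corresponding to the ancestors of the leaf indexed by $i$ in $\TT_n$ (the leaf itself included). Since $\TT_n$ has height $\lceil \log n\rceil$, there are $\Oh(\log n)$ such ranges. The same argument applied to the $y$-coordinate shows that there are $\Oh(\log n)$ base ranges $R_y$ with $\pi(i)\in R_y$. By taking the Cartesian product, exactly $\Oh(\log^2 n)$ pairs $(R_x,R_y)$ contain the given point.

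Next I would sum this quantity over all $n$ points of $S_\pi$, giving a total of $\Oh(n\log^2 n)$ incidences of the form ``point $p$ lies in $P_{R_x,R_y}$''. Each relevant pair $(R_x,R_y)$, by definition, contributes at least one such incidence (since $P_{R_x,R_y}\neq\emptyset$). Therefore the number of relevant pairs is at most the total number of incidences, which yields the claimed $\Oh(n\log^2 n)$ bound.

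There is no real obstacle here: the only thing to be careful about is that both $\TT_n$ factors really have height $\lceil \log n\rceil$ and that we are counting pairs, not triples, so the exponent of the logarithm is exactly $2$. The argument does not depend on $\pi$ being a permutation beyond the fact that $|S_\pi|=n$; any set of $n$ points would satisfy the same bound.
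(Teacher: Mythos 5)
Your argument is correct and is exactly the paper's reasoning: each of the $n$ points lies in $\Oh(\log n)$ base ranges on each coordinate, hence in $\Oh(\log^2 n)$ products $P_{R_x,R_y}$, and since every relevant pair contains at least one point the bound follows by double counting.
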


\paragraph{General remarks.} All the reductions we show in this paper are split into several intermediate steps.
Unless stated otherwise, each presented reduction runs in time linear in the total size of the input and the sum of sizes of the created instances of the other problem we reduce to.

\subsection{Range Queries and Short Patterns}

Some of our algorithms use range queries for counting points in rectilinear (aligned with the $x$- and the $y$-axis) rectangles efficiently.
Below we provide the precise interface for such queries.

\begin{lemma}[\cite{Chazelle88,JaJaMS04}]\label{le:interface_range_queries}
 There exists a deterministic data structure that preprocesses a set of $n$ weighted points in $\Oh(n\log n)$ time and
 answers queries about the number or the sum of weights of points inside rectilinear rectangles in $\Oh(\log n)$ time.
\end{lemma}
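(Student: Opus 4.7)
The plan is to build a persistent segment tree indexed by $y$-coordinate. First I would coordinate-compress the $x$- and $y$-values so both lie in $[n]$, which costs $\Oh(n\log n)$ time via sorting. Next I would sort the points by $x$ and insert them one at a time, in increasing $x$-order, into a fully persistent segment tree built over the $y$-range $[1,n]$; each internal node stores the total weight of points whose $y$-coordinate lies in the node's range, as of that version. Since an insertion modifies only the $\Oh(\log n)$ nodes along a single root-to-leaf path, I would allocate new copies of just those nodes and share the rest with the previous version, obtaining $n+1$ versions in $\Oh(n\log n)$ total time and space. By construction, version $i$ represents exactly the points whose $x$-rank is at most $i$.

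To answer a rectangular query $[x_l,x_r]\times[y_l,y_r]$, I would locate in $\Oh(\log n)$ time the roots of versions $x_r$ and $x_l-1$ (the relevant $x$-ranks are obtained from the compressed coordinates) and descend into both trees in lockstep. At each visited node I would contribute the difference of its two stored weights if its $y$-range is contained in $[y_l,y_r]$, return $0$ if the $y$-range is disjoint from $[y_l,y_r]$, and otherwise recurse into both children. A standard induction on segment trees shows that at most $\Oh(\log n)$ nodes are ever visited, so each query runs in $\Oh(\log n)$ time; telescoping the per-node difference gives precisely the sum of weights of points inside the query rectangle. For the unweighted ``count'' variant I would simply assign weight $1$ to every point, so it reduces to the weighted case.

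The main obstacle I foresee is maintaining a \emph{deterministic} $\Oh(\log n)$ query bound while keeping the construction simple: this is why I would apply persistence to a perfectly balanced segment tree rather than to a randomized search tree, so that every insertion touches a known $\Oh(\log n)$ path and every query descends through $\Oh(\log n)$ nodes in the worst case. An equivalent route, which is the one actually taken by \cite{Chazelle88,JaJaMS04}, is the classical two-dimensional range tree augmented with fractional cascading; there the extra work is a careful construction of cross-level pointers so that, after one initial $\Oh(\log n)$ binary search on $y$, locating $y_l$ and $y_r$ in each of the $\Oh(\log n)$ canonical $x$-subtrees can be done in $\Oh(1)$ amortized per subtree rather than $\Oh(\log n)$.
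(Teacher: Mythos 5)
The paper does not prove this lemma; it is stated as a citation to \cite{Chazelle88,JaJaMS04} and used as a black box throughout. Your persistent segment tree construction is a correct, self-contained alternative proof achieving the claimed bounds: coordinate compression costs $\Oh(n\log n)$, the $n$ path-copying insertions build all versions in $\Oh(n\log n)$ time and space, and the lockstep descent through the two relevant version roots visits $\Oh(\log n)$ nodes per query, with the telescoping difference giving exactly the in-rectangle weight. Two minor remarks. First, you only need \emph{partial} persistence here, not full persistence: you always update the most recent version and never branch from an old one, and the simple path-copying scheme you describe is precisely partially persistent. Second, your characterization of \cite{Chazelle88,JaJaMS04} as using a layered range tree with fractional cascading is essentially right for Chazelle, though those works are also concerned with tighter space bounds (for example $\Oh(n)$ words or $\Oh(n\log n)$ bits), which your construction does not match; this is immaterial for the paper, whose time bounds are all that are used downstream. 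So the two approaches are interchangeable for the present purposes, with yours arguably being the more elementary to present from scratch.
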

\noindent
For completeness, we explain the folklore algorithms for counting patterns shorter than 4.

\begin{theorem}[cf.~{\cite[Corollary 2]{Even-ZoharL19}}]\label{thm:short_patterns}
 For any pattern $\sigma$, $|\sigma|<4$ there exists an algorithm counting $\sigma$ in permutations of length $n$ in $\Ohtilda(n)$ time.
\end{theorem}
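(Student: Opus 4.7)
The plan is to handle the cases $|\sigma|=1$, $|\sigma|=2$, and $|\sigma|=3$ separately. For $|\sigma|=1$ the answer is trivially $n$. For $|\sigma|=2$ I would count inversions by merge sort (or by $n$ calls to Lemma~\ref{le:interface_range_queries}) in $\Oh(n\log n)$ time to obtain $\#21$, and then set $\#12=\binom{n}{2}-\#21$. The substantial case is $|\sigma|=3$.

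For $|\sigma|=3$, the first step is to compute, for each index $i\in[n]$, the four quadrant counts around the point $(i,\pi(i))$:
\[
A(i)=|\{j<i:\pi(j)<\pi(i)\}|,\qquad B(i)=|\{j<i:\pi(j)>\pi(i)\}|,
\]
\[
C(i)=|\{j>i:\pi(j)<\pi(i)\}|,\qquad D(i)=|\{j>i:\pi(j)>\pi(i)\}|.
\]
Each of these $4n$ numbers is produced by one orthogonal rectangle count, so by Lemma~\ref{le:interface_range_queries} the whole precomputation takes $\Ohtilda(n)$ time. Two of the six patterns can then be read off immediately by fixing the middle position $i_2$: for $123$ the flanking points are forced into opposite quadrants with no interaction, giving $\#123=\sum_i A(i)D(i)$, and symmetrically $\#321=\sum_i B(i)C(i)$.

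The small obstacle is that in the remaining four patterns $132,213,231,312$ the middle position carries an extremal value, so fixing $i_2$ leaves an awkward $y$-ordering between the two flanking points that both sit in the same horizontal half-plane. I would avoid this by \emph{bundling patterns according to which of the three positions carries the extremum}. For instance, fixing the maximum at the rightmost position $i_3=i$ forces both other points into the lower-left quadrant of $(i,\pi(i))$, and the two possible orderings of such a pair contribute exactly to $\#123$ and $\#213$; this gives the identity $\#123+\#213=\sum_i\binom{A(i)}{2}$. The analogous identities obtained by placing the minimum at the leftmost position, the minimum at the rightmost position, and the maximum at the leftmost position read $\#123+\#132=\sum_i\binom{D(i)}{2}$, $\#231+\#321=\sum_i\binom{B(i)}{2}$, and $\#312+\#321=\sum_i\binom{C(i)}{2}$.

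Since $\#123$ and $\#321$ are already known, a single subtraction in each of these four identities recovers $\#132$, $\#213$, $\#231$, and $\#312$, and the algorithm runs in $\Ohtilda(n)$ total time. The only conceptually non-routine step is spotting the pairing-by-extremum identities that sidestep any explicit cross-quadrant comparison; everything else is a straightforward application of the range-counting data structure of Lemma~\ref{le:interface_range_queries}.
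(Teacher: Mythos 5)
Your proof is correct and follows essentially the same approach as the paper: range queries (Lemma~\ref{le:interface_range_queries}) to compute the quadrant counts, a product formula $\sum_i A(i)D(i)$ for the monotone pattern with the middle element fixed, and a binomial-coefficient subtraction for the remaining patterns. The only difference is organizational: the paper counts just $123$ and $132$ (the latter via exactly your identity $\#123+\#132=\sum_i\binom{D(i)}{2}$) and then deduces the other four by reversing and/or complementing the permutation, whereas you compute $\#321$ directly and write out all four pairing-by-extremum identities explicitly, avoiding the symmetry reductions.
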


\begin{proof}
Let $k=|\sigma|$.
Clearly, if $k=1$, we return $n$, the number of elements.
For $k=2$ and the pattern $12$ ($21$), for every element we count the number of larger (smaller) elements to the right, using a range query.
The precise interface for range queries used in this proof is provided in Lemma~\ref{le:interface_range_queries}.
Finally, if $k=3$ it suffices to show how to count patterns $123$ and $132$, because the other four patterns can be obtained from one of them after reversing and/or replacing every number $x$ with $4-x$.

\paragraph{123.}
We iterate through elements of $\pi$ and for each position $i$ we count occurrences of $123$ with the considered element $i$ as the middle one.
Let $x_i$ be the number of elements smaller than $\pi(i)$ to the left of $i$ and $y_i$ be the number of elements larger than $\pi(i)$ to the right of $i$, both these values can be obtained with a range query.
Then there are $x_i\cdot y_i$ occurrences of $123$ with the middle element at position $i$, so $\#_{123}(\pi)=\sum_{i=1}^nx_i\cdot y_i$.

\paragraph{132.}
We iterate through elements of $\pi$ and for each position $i$ we count pairs of elements to the right of $i$ which are larger than $\pi(i)$. 
This counts both the occurrences of $123$ and $132$, with the considered element $i$ as the first one.
Let $y_i$ be the number of elements larger than $\pi(i)$ to the right of $i$, which can be retrieved with a range query.
Then using the number of patterns $123$ computed in the previous paragraph we get:
$\#_{132}(\pi)=\sum_{i=1}^n\binom{y_i}{2} - \#_{123}(\pi)$.
\end{proof}

\subsection{Counting 4-Cycles}
Whenever we talk about counting 4-cycles in a graph we mean simple cycles (with all nodes distinct) of length 4, but not necessarily induced.
For counting 4-cycles self-loops and isolated nodes are irrelevant, but there might be multiple edges, and then we count the cycle (defined as a cyclic sequence of nodes) multiple times: the product of the multiplicities of the relevant edges.
Following the naming convention from \cite{LincolnWW18}, we define a \textit{4-circle-layered graph} to be a 4-partite directed graph with four disjoint groups of nodes $V_0,\ldots,V_3$ such that every edge in the graph is from the group $V_i$ to $V_{(i+1)\bmod 4}$ for some $0\leq i \leq 3$.

First, we show that, informally, counting 4-cycles in undirected graphs is equivalent to counting 4-cycles in 4-circle-layered graphs.
More precisely, we provide a sequence of reductions for counting 4-cycles in different graphs, starting from undirected graphs, through directed graphs to 4-circle-layered graphs and then back to undirected graphs.
We show that counting 4-cycles in a graph of each type can be reduced in $\Oh(m)$ time to a constant number of instances of counting 4-cycles in graphs of the next type.

\FIGURE{h}{.95}{cycles}{(a) Sequence of reductions showing equivalence between counting 4-cycles in undirected, directed and 4-circle-layered graphs.
(b) Non-simple cycles from $G'$ to subtract in reduction (ii).
(c) Cycles to subtract (top) and add (bottom) in reduction (iii).}

\begin{lemma}\label{le:undirected-and-directed-4-partite}
Counting 4-cycles in undirected graphs on $m$ edges can be reduced to a constant number of instances of counting 4-cycles in 4-circle-layered graphs on $\Oh(m)$ edges and vice versa.
\end{lemma}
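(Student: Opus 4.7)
The plan is to establish the equivalence by traversing the cycle of reductions suggested by Figure~\ref{fig:cycles}: undirected $\to$ directed $\to$ 4-circle-layered $\to$ undirected. Each individual reduction produces an $\Oh(m)$-size instance, so the composition yields the two-way reduction claimed in the lemma.

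For step~(i), I turn an undirected graph $G$ into a directed graph $G'$ by replacing each undirected edge $\{u,v\}$ with both arcs $u\to v$ and $v\to u$. A simple directed 4-cycle in $G'$ is a cyclic sequence $u\to v\to w\to x\to u$ of four distinct vertices, and these correspond bijectively to pairs (undirected 4-cycle of $G$, orientation of that cycle), so the directed count is exactly twice the undirected count.

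For step~(ii), I build a 4-circle-layered graph $G''$ by taking four copies $V_0,\dots,V_3$ of $V(G')$ and adding, for every arc $u\to v$ of $G'$, the arc $u_i\to v_{(i+1)\bmod 4}$ for each $i\in\{0,1,2,3\}$. Any simple 4-cycle of $G'$ contributes exactly four simple 4-cycles of $G''$ (one per choice of starting layer). The extra 4-cycles of $G''$ come from closed walks of length four in $G'$ that repeat some vertex; since the four layered copies lie in distinct layers and are therefore automatically distinct in $G''$, the only possible coincidences in $G'$ are $u=w$ or $v=x$. These ``bow-tie'' walks, illustrated in Figure~\ref{fig:cycles}(b), can be counted in $\Oh(m)$ time by aggregating over the reciprocal neighbours of each vertex, and subtracting them from the 4-cycle count of $G''$ (divided by $4$) recovers the simple 4-cycle count of $G'$.

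For step~(iii), I build an undirected graph $G'''$ on the disjoint union $V_0\sqcup\cdots\sqcup V_3$ by forgetting directions. Any 4-cycle in $G'''$ visits a sequence of layers $(\ell_0,\ell_1,\ell_2,\ell_3)$ with $\ell_{i+1}-\ell_i\equiv\pm 1\pmod 4$ and $\sum(\ell_{i+1}-\ell_i)\equiv 0\pmod 4$. The ``good'' cycles whose four steps all have the same sign traverse all four layers once and, after accounting for the two directions of traversal, correspond to 4-cycles of $G''$. The ``bad'' cycles use two $+1$ and two $-1$ steps and therefore lie in two or three consecutive layers, following a layer pattern such as $(0,1,0,1)$ or $(0,1,2,1)$. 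Each bad pattern is counted in $\Oh(m)$ time, either by running a $C_4$-counting routine on each adjacent bipartite slice $G'''[V_i\cup V_{i+1}]$ or by iterating over a candidate repeated middle vertex and aggregating pair counts from its neighbourhoods in the two flanking layers. Combining these counts via inclusion--exclusion (the subtraction and addition indicated in Figure~\ref{fig:cycles}(c)) and dividing by the orientation factor yields the 4-cycle count of $G''$.

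The main obstacle is the bookkeeping in steps~(ii) and~(iii): enumerating every type of contaminating walk or alternating cycle, checking that each can be counted in $\Oh(m)$ time by a simple aggregation of degrees and neighbourhood sizes, and verifying that the resulting linear combination reproduces the desired count exactly. Once this is done, all intermediate graphs have size $\Oh(m)$, so the entire chain of reductions preserves running time up to constants.
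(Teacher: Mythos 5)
Your three-step chain (undirected $\to$ directed $\to$ 4-circle-layered $\to$ undirected) is exactly the paper's structure, and steps (i) and (ii) are correct and match the paper. The issue is in step (iii). You claim that every ``bad'' layer pattern can be counted in $\Oh(m)$ time, and for the three-layer patterns such as $(0,1,2,1)$ you propose ``iterating over a candidate repeated middle vertex and aggregating pair counts from its neighbourhoods in the two flanking layers.'' But in that pattern there is no repeated vertex: the 4-cycle is simple, with two \emph{distinct} vertices of $V_1$ and one each of $V_0,V_2$. Worse, there is no known $\Oh(m)$-time algorithm for counting such cycles---it is essentially another instance of sparse $C_4$-counting, which is precisely the problem the lemma is about, so assuming a linear-time subroutine here would make the reduction vacuous. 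The paper's actual solution is simply to invoke the 4-cycle-counting \emph{oracle} on the constant-many induced subgraphs $G[V_i\cup V_{i+1}\cup V_{i+2}]$ and $G[V_i\cup V_{i+1}]$ (nine $\Oh(m)$-edge instances in all, counting $G$ itself) and combine via inclusion--exclusion:
\[
\#_{C_4}(G'') \;=\; \#_{C_4}(G)\;+\;\sum_{0\le i\le 3}\#_{C_4}\bigl(G[V_i\cup V_{i+1}]\bigr)\;-\;\sum_{0\le i\le 3}\#_{C_4}\bigl(G[V_i\cup V_{i+1}\cup V_{i+2}]\bigr).
\]
You do gesture at ``running a $C_4$-counting routine on each adjacent bipartite slice,'' which is the right move for the two-layer patterns (though it is an oracle call, not an $\Oh(m)$-time computation), but the three-layer patterns need the same treatment rather than a direct linear-time count. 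Finally, there is no ``orientation factor'' to divide by in step (iii): the directed layering of $G''$ already fixes the traversal direction, so each directed 4-cycle of $G''$ corresponds to exactly one undirected 4-cycle of $G$ visiting all four layers.
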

\begin{proof}
 We consider three types of graphs, first undirected graphs, then directed graphs and finally 4-circle-layered graphs.
 For each of them we show that counting 4-cycles in graphs of this type can be reduced in $\Oh(m)$ time to a constant number of instances of counting 4-cycles in the  graphs of the next type, as presented in Figure~\ref{fig:cycles}(a).
 We describe each of the reductions separately:
 \begin{enumerate}[(i)]
  \item \textbf{undirected $\rightarrow$ directed.} Given an undirected graph $G$ we construct a directed graph $G'$ replacing every undirected edge with two directed edges.
  Then the number of 4-cycles in $G'$ is twice the number of 4-cycles in $G$, as every cycle can be traversed in both directions.
  Then we have: $\#_{C_4}(G)=\frac12 \#_{C_4}(G')$.
  \item \textbf{directed $\rightarrow$ 4-circle-layered.} Given a directed graph $G'$ we construct a 4-circle-layered graph $G''$ by copying nodes of $G'$ four times and adding edges between corresponding nodes from two consecutive groups.
  More precisely, let $v_i''\in V_i''$ in $G''$ be the copy of node $v'$ from $G'$ in the $i$-th group.
  For every directed edge $(u',v')$ in $G'$ we add the edge $(u_i'',v_{(i+1)\bmod 4}'')$ to $G''$ for all $0\leq i \leq 3$.
  Then the number of 4-cycles in $G''$ is 4 times the number of 4-cycles in $G'$ plus some additional cycles which do not correspond to simple cycles in $G'$.
  More precisely, all the additional 4-cycles in $G''$ correspond to non-simple (on 2 or 3 distinct nodes) 4-cycles in $G'$, which are shown in Figure~\ref{fig:cycles}(b) and can be counted in linear time.
  Formally, let $b(u')=|\{v'\in V':(u',v')\in E' \wedge (v',u')\in E'\}|$ be the number of neighbors of a node $u'$ connected to $v'$ in both directions, which can be obtained by sorting the adjacency lists in linear time.
  Then we have: $\#_{C_4}(G')=\frac14\left(\#_{C_4}(G'')-\sum_{u'\in V'}\left(4\binom{b(u')}{2} + b(u')\right)\right)$.
  \item \textbf{4-circle-layered $\rightarrow$ undirected.} Given a 4-circle-layered graph $G''$ we create an undirected graph $G$ by undirecting all edges from $G''$.
  Then we can no longer ensure that the 4-cycles pass through 4 different groups of nodes, so we need to subtract 4-cycles fully contained in three groups of nodes and add 4-cycles fully contained in two groups, as shown in Figure~\ref{fig:cycles}(c).
  The number of such cycles can be obtained by counting 4-cycles in the graph $G$ restricted only to the particular groups of nodes.
  Formally, let $V_i$ be the group of nodes corresponding to $V_i''$ in $G''$ and $G[W]$ be the subgraph of $G$ restricted to the nodes from $W$ and edges between them.
  Then we have:
  $\#_{C_4}(G'')= \#_{C_4}(G)+ \sum_{0\leq i\leq 3}\#_{C_4}(G[V_i\cup V_{i+1}])-\#_{C_4}(G[V_i\cup V_{i+1}\cup V_{i+2}])$
  where the indices $i+1$ and $i+2$ are taken modulo 4.
  \qedhere
 \end{enumerate}
\end{proof}

A multigraph is a triple $(V,E,\mult)$, where $E$ is a set of $m$ edges and the function $\mult: E \rightarrow \{1,\ldots,U\}$ denotes multiplicity of an edge.
For simple graphs it holds that $\mult(e)=1$ for all edges $e\in E$ and the function is omitted.
Throughout this paper we focus mainly on simple graphs, but in one of the provided reductions we obtain a 4-circle-layered graph with multiplicities on every edge (or in other words, a 4-circle-layered multigraph), so in the following lemma we show how to reduce counting 4-cycles in such graphs to counting 4-cycles in 4-circle-layered simple graphs.

\begin{lemma}\label{le:dir_multi_to_simple}
  Counting 4-cycles in a 4-circle-layered multigraph with edge multiplicities bounded by~$U$ can be reduced to $\Oh(\log^4 U)$ instances of counting 4-cycles in 4-circle-layered simple graphs of the same size as the original graph. 
\end{lemma}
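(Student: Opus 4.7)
The plan is to linearise the weighted 4-cycle count by expanding each edge's multiplicity in binary. Let $k=\lceil \log U\rceil$ and write $\mult(e)=\sum_{j=0}^{k-1} b_j(e)\cdot 2^j$, where $b_j(e)\in\{0,1\}$. By the definition given just before the lemma, the number of 4-cycles in the multigraph equals
\[
\sum_{(v_0,v_1,v_2,v_3)}\; \prod_{i=0}^{3}\mult\bigl(v_i,v_{(i+1)\bmod 4}\bigr),
\]
where the outer sum runs over 4-tuples of pairwise distinct nodes with $v_i\in V_i$ such that each consecutive pair is an edge of the multigraph.

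I would substitute the binary expansion into the product, distribute, and swap the order of summation. Writing $[x]=1$ if a predicate holds and $0$ otherwise, this yields
\[
\sum_{(j_0,j_1,j_2,j_3)\in\{0,\ldots,k-1\}^4} 2^{\,j_0+j_1+j_2+j_3}\;\sum_{(v_0,v_1,v_2,v_3)}\;\prod_{i=0}^{3}\bigl[b_{j_i}\bigl(v_i,v_{(i+1)\bmod 4}\bigr)=1\bigr].
\]

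For each fixed 4-tuple $(j_0,j_1,j_2,j_3)$, I would then construct a 4-circle-layered simple graph $G_{j_0,j_1,j_2,j_3}$ on the same node set $V_0\cup V_1\cup V_2\cup V_3$ by keeping, for each $i$, only those edges from $V_i$ to $V_{(i+1)\bmod 4}$ whose $j_i$-th bit is $1$. Since this only removes edges, $G_{j_0,j_1,j_2,j_3}$ is simple, remains 4-circle-layered, and has no more edges than the original multigraph. The inner sum above is exactly $\#_{C_4}(G_{j_0,j_1,j_2,j_3})$, so the total 4-cycle count in the multigraph is
\[
\sum_{(j_0,j_1,j_2,j_3)\in\{0,\ldots,k-1\}^4} 2^{\,j_0+j_1+j_2+j_3}\cdot \#_{C_4}\bigl(G_{j_0,j_1,j_2,j_3}\bigr),
\]
which is a sum over $k^4=\Oh(\log^4 U)$ instances, as required.

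There is no real obstacle here; the only subtlety to spell out is that the filtering is layer-dependent, so the $j_i$-th coordinate of the tuple is used only to choose bits of edges between $V_i$ and $V_{(i+1)\bmod 4}$. This is necessary so that the product of indicator bits in the rearranged sum matches the product of multiplicity bits contributed by a cycle traversed in the fixed cyclic order. Constructing each $G_{j_0,j_1,j_2,j_3}$ takes time linear in the size of the multigraph, well within the reduction budget.
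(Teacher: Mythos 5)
Your argument is correct and is essentially the same reduction as the paper's: expand each edge multiplicity in binary, iterate over all $\Oh(\log^4 U)$ quadruples of bit positions (one per layer), keep for each layer only the edges whose corresponding bit is set, count 4-cycles in the resulting simple 4-circle-layered graph, and accumulate with weight $2^{j_0+j_1+j_2+j_3}$. You spell out the distributivity/swap-of-summation step more explicitly than the paper does, but the construction and conclusion are identical.
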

\begin{proof}
 Intuitively, we split every edge of the graph into edges with multiplicities being powers of two and iterate over all possible combinations of powers of two forming the cycle.
 
 More precisely, we iterate over all quadruples $(p_0,p_1,p_2,p_3) \in \{0,\ldots,\lfloor\log U\rfloor\}^4$ and for each of them create a simple, unweighted 4-circle-layered graph on the same set of nodes as the original graph and a subset of its edges.
 For all $0\leq i\leq 3$ we keep only the edges between groups $V_i$ and $V_{(i+1)\bmod 4}$ such that their multiplicity contains~$2^{p_i}$ in its binary representation.
 Then we count the number of 4-cycles in the obtained graph and multiply it by $2^{\sum_i p_i}$.
 Finally, the total number of 4-cycles in the original multigraph is the sum of results obtained for each quadruple.
\end{proof}
\section{Counting Patterns}

In this section we show that counting 4-partite patterns is equivalent, up to logarithmic factors, to counting 4-patterns.
The flavor of our arguments is similar to the ones used in \cite{Even-ZoharL19}, but we avoid the notion of corner tree formulas and explicitly state two technical lemmas that are required for our main result.
First we show that counting 4-partite patterns can be reduced to counting 4-patterns by omitting the division of the plane and using inclusion-exclusion principle.

\begin{lemma}\label{le:4-partite-pattern-2-pattern}
 Counting 4-partite pattern $\sigma_4$ on $n$ elements can be reduced to a constant number of instances of counting 4-pattern $\sigma$ in permutations of total size $\Oh(n)$.
\end{lemma}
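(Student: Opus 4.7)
The plan is a straightforward inclusion--exclusion over the four regions. Write $Q=\{TL,TR,BL,BR\}$, and for every $R\subseteq Q$ form a sub-permutation $\pi_R$ by taking the points of $S_\pi$ that lie in $\bigcup_{r\in R}r$ and rank-normalizing their $x$- and $y$-coordinates. Because rank-normalization preserves order isomorphism, the number $f(R)$ of occurrences of $\sigma$ in $\pi_R$ equals the number of (unrestricted) occurrences of $\sigma$ among the points of $P_R=S_\pi\cap\bigcup_{r\in R}r$. Each $f(R)$ is thus exactly one instance of the pattern-counting problem on a permutation of length at most $n$.

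Next I would apply M\"obius inversion on the Boolean lattice $2^Q$. For each occurrence of $\sigma$ in $\pi$, define its \emph{footprint} $T\subseteq Q$ as the set of regions containing at least one of its four points, and let $g(T)$ denote the number of occurrences with footprint exactly $T$. Since every occurrence is counted by $f(R)$ precisely when $R\supseteq T$, we have $f(R)=\sum_{T\subseteq R}g(T)$, and standard M\"obius inversion yields
\[
g(Q) \;=\; \sum_{R\subseteq Q}(-1)^{4-|R|}\,f(R).
\]
A 4-point occurrence has footprint $Q$ if and only if each of the four regions contains exactly one of its points, i.e.\ if and only if it has shape $\shape{1}{1}{1}{1}$; hence $g(Q)=\#_{\sigma_4}\!\left(\shape{TL}{TR}{BL}{BR}\right)$, which is exactly the quantity we want.

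Finally, $|Q|=4$ gives $2^4=16$ subsets $R$, so the reduction produces a constant number of sub-permutations. Each point of $\pi$ lies in a single region and therefore contributes to $f(R)$ for exactly $2^3=8$ subsets $R$, so the total length of all produced permutations is at most $8n=\Oh(n)$, as required; constructing the $\pi_R$ from $\pi$ is plainly doable in linear time. The only technical point --- and the mildest of obstacles --- is confirming the inclusion--exclusion coefficients via the M\"obius step above; once that is in place, the reduction follows immediately from the fact that restricting to a subset of points and re-ranking does not alter which 4-tuples form the pattern $\sigma$.
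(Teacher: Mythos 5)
Your proof is correct and uses essentially the same inclusion--exclusion approach as the paper; note that your coefficient $(-1)^{4-|R|}$ coincides with the paper's $(-1)^{|R|}$ since they differ by $(-1)^4=1$. The M\"obius-inversion framing and the explicit ``footprint'' bookkeeping simply spell out details the paper's one-line argument leaves implicit, and your observation that each point contributes to exactly $8$ of the $16$ sub-permutations cleanly justifies the $\Oh(n)$ total-size bound.
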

\begin{proof}

When we omit the division of the plane and count the pattern $\sigma$ in the plane, we additionally count also the quadruples of points forming the pattern but coming from not all of the 4 regions of the plane.
To address this, we use inclusion-exclusion principle and add or subtract patterns on points from all possible subsets of regions.
Then the number of 4-partite patterns is:
$$\#_{\sigma_4}\left(\bigshape{TL}{TR}{BL}{BR}\right)=  \sum_{S\subseteq\{TL,TR,BL,BR\}} (-1)^{|S|}\cdot \#_\sigma\left(\bigcup_{Q\in S} Q\right)$$
where the union over regions chooses the specific subset of points preserving the relative order between them, as in the original setting.
\end{proof}

For the reduction in the other direction, first we need a technical lemma showing that all proper shapes but $\shape1111$ can be counted in $\Ohtilda(n)$ time.
Recall that we do not have to consider rotations of shapes separately, as they are equivalent under linear-time transformations of the input.

\newcommand{\REG}[4]{
\ifmmode
\frac{\mathsmaller{#1|#2}}{\mathsmaller{#3|#4}}
\else
$\frac{\mathsmaller{#1|#2}}{\mathsmaller{#3|#4}}$\xspace
\fi}
\newcommand{\BR}{\REG\square\square\square\blacksquare}
\newcommand{\BL}{\REG\square\square\blacksquare\square}
\newcommand{\TR}{\REG\square\blacksquare\square\square}
\newcommand{\TL}{\REG\blacksquare\square\square\square}

\begin{lemma}\label{le:easy_shapes}
 For any 4-pattern $\sigma$ and division of the plane with $n$ points, the shapes $\shape3001,\shape2002,\shape1102,\shape1201$ can be counted in $\Ohtilda(n)$ time.
\end{lemma}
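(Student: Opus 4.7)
The plan is to treat each of the four shapes separately, exploiting the geometric constraints imposed by the two dividing lines to reduce the $4$-pattern count to a combination of counts of patterns of length at most $3$, each of which runs in $\Ohtilda(n)$ by Theorem~\ref{thm:short_patterns} together with the range-query data structure from Lemma~\ref{le:interface_range_queries}.

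For $\shape3001$ and $\shape2002$ I would use the fact that every point of $TL$ lies strictly to the upper-left of every point of $BR$, so the relative configuration of any $TL$-point with any $BR$-point in a $4$-tuple is already fixed by the dividing lines. Consequently, a pattern $\sigma$ compatible with the shape imposes a specific sub-pattern $\sigma'$ on the $TL$-points and, for $\shape2002$, also a specific sub-pattern $\sigma''$ on the $BR$-points, and the count factorises either as $|BR|\cdot\#_{\sigma'}(TL)$ or as $\#_{\sigma'}(TL)\cdot\#_{\sigma''}(BR)$, with $\sigma'$ of length $3$ or $2$ and $\sigma''$ of length $2$, both read off from $\sigma$ directly; an incompatible $\sigma$ contributes $0$. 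Each factor is computed in $\Ohtilda(n)$ by Theorem~\ref{thm:short_patterns}.

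For $\shape1102$ three regions contribute, so the count is no longer purely multiplicative. My plan is to iterate over every $TR$-point $p$ and write the total as $\sum_p T(p)\cdot B(p)$, where $T(p)$ is the number of $TL$-points whose $y$-coordinate lies above or below $p.y$ as $\sigma$ requires (a rectangle query via Lemma~\ref{le:interface_range_queries}), and $B(p)$ is the number of $BR$-pairs satisfying the $x$-relation to $p.x$ and the $2$-pattern between them that $\sigma$ prescribes. The key observation is that every $BR$-point has $y<p.y$ for free, so $B(p)$ depends only on the single coordinate $p.x$. Sorting the $BR$-points by $x$ and sweeping $p.x$ from left to right, I would maintain $B(p)$ incrementally: each time $p.x$ crosses a $BR$-point, the running count is updated by a one-dimensional range query on the remaining $BR$-points using a Fenwick tree. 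Summing $T(p)\cdot B(p)$ over all $TR$-points then takes $\Ohtilda(n)$ time. Shape $\shape1201$ is handled by the same general plan — iterate over one region's point, resolve another by a range query, maintain the third by a sweep — with the additional wrinkle that the $TL$-vs-$TR$ $y$-rank constraint, which is absent in $\shape1102$, forces an extra sweep over $TL$-points.

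The hard part will be arranging the sweeps for $\shape1102$ and $\shape1201$ correctly: a direct evaluation of $B(p)$ per $TR$-point would cost $\Omega(n)$ and blow up the total to $\Omega(n^2)$, and what rescues the argument is that the constraint on $B(p)$ is univariate in $p$, so an incremental counter with logarithmic updates suffices. This trick fails completely for the missing shape $\shape1111$, where every pair of regions interacts in both coordinates simultaneously and no univariate sweep decouples the constraints, which is precisely the reason $\shape1111$ must be handled via the reduction to counting $4$-cycles in the main part of the paper.
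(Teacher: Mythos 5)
Your handling of $\shape3001$, $\shape2002$, and $\shape1102$ is sound and essentially matches the paper's argument: for the first two shapes the dividing lines fix all cross-region relations, so the count factorises into products of short-pattern counts on $TL$ and $BR$; for $\shape1102$ iterating over $TR$-points $p$, noting that $T(p)$ depends only on $p.y$ and $B(p)$ only on $p.x$, and using weighted range queries (your sweep with a Fenwick tree indexed by $y$ is a valid alternative to the paper's precomputed ``to the right and down''/``to the left and up'' weights) gives $\Ohtilda(n)$ time.

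The gap is in $\shape1201$, which you dispatch in one vague sentence (``iterate over one region's point, resolve another by a range query, maintain the third by a sweep''). In this shape there is one point $q\in TL$, two points $a,b\in TR$ ($a$ to the left of $b$, say forming pattern $21$), and one point $r\in BR$, and the count is $\sum_{(a,b)} Q(a,b)\cdot R(a,b)$ where $Q(a,b)$ is the number of admissible $q$'s (determined by the $y$-relation of $q$ to $a.y$ and $b.y$) and $R(a,b)$ is the number of admissible $r$'s (determined by the $x$-relation of $r$ to $a.x$ and $b.x$). Your ``univariate $B(p)$'' insight, which is precisely what rescues $\shape1102$, does \emph{not} transfer here when $\sigma=3412$: then $q$ must lie strictly between $a$ and $b$ in $y$ and $r$ must lie strictly between $a$ and $b$ in $x$, so both $Q(a,b)$ and $R(a,b)$ are genuinely bivariate functions of the pair $(a,b)$ and cannot be reduced to a weight carried by $a$ alone or $b$ alone. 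This is exactly the obstacle the paper flags (``now we do not have a single `central' region''), and the paper resolves it not by a sweep but by a \emph{subtraction} argument: among the nine possible $(q,r)$-placements relative to $(a,b)$, eight reduce to a weight depending on only one of $a,b$ and are computable as you describe, and the ninth ($\sigma=3412$) is recovered as $|TL|\cdot|BR|\cdot\#_{21}(TR)$ minus the sum of the other eight. Your proposal neither identifies this bad case nor supplies the subtraction (or an equivalent algebraic decomposition of $Q\cdot R$ into rank-one terms), so as written it does not establish the lemma for all $\sigma$.
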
	
\begin{proof}
 To simplify the presentation, we use the graphical symbols to denote particular regions of the plane:\TL,\TR,\BL and\BR that denote $TL,TR,BL$ and $BR$ respectively.
 Notice the difference between the notion for 4-partite patterns $\sigma_4$ where $\#_{\sigma_4}\left(\shape{TL}{TR}{BL}{BR}\right)=\#_{\sigma_4}\left(\bigshape{\TL}{\TR}{\BL}{\BR}\right)$ and non-4-partite patterns $\sigma$, for which we use division of the plane only to specify the subset of points in which we count patterns, e.g. $\#_\sigma(\TL)=\#_\sigma(TL)$.
 In order to count shapes $\shape3001$ and $\shape2002$ it suffices to count appropriate 3-, 2- or 1-patterns on points in\TL or\BR and multiply the two numbers.
 By Theorem~\ref{thm:short_patterns}, this approach runs in $\Ohtilda(n)$ time.
 
\FIGURE{h}{.7}{simple-shapes}{(a) A quadruple of points forming $\shape1102$, where the two bottom points alone form the pattern~$21$. (b) Naming of points in $\shape1201$. (c) $3412$ is the most difficult pattern to count.}
 
 Now we show how to count the shape $\shape1102$.
 Suppose that in the pattern $\sigma$, the two points in\BR form the pattern $21$, see Figure~\ref{fig:simple-shapes}(a) for an example.
 For the other case of the pattern $12$ we can apply horizontal reflection for points in both the bottom regions.
 First we preprocess\BR and for every point there we count points from\BR ``to the right and down'' of it and ``to the left and up'' using range queries.
 The precise interface for range queries used in this proof is provided in Lemma~\ref{le:interface_range_queries}.\
 Next, we iterate over all points $p$ in\TR and for each of them need to count points in\TL and pairs of pairs of points in\BR that together form the pattern $\sigma$.
 The former number is computed with a range query about the number of points from\TL that are below or above $p$, depending on $\sigma$.
 To compute the latter number, notice that the point $p$ can be in three positions with respect to the two points from\BR: either to the left of both of them, to the right or in-between (as in Figure~\ref{fig:simple-shapes}(a)).
 Each of the cases can be retrieved by either:
 \begin{enumerate}[(a)]
  \item \label{it:rd} counting points ``to the right and down'' for all points from\BR to the right of $p$, or
  \item \label{it:lu} counting points ``to the left and up'' for all points from\BR to the left of $p$, or
  \item subtracting the values obtained in (\ref{it:rd}) and (\ref{it:lu}) from $\#_{21}(\BR)$, the total number of pairs of points from\BR, such that one of them is ``to the right and down'' from the other.
 \end{enumerate}
 All the above values can be obtained in $\Oh(\log n)=\Ohtilda(1)$ time with range queries about the sum of weights of points in a rectangle.
 
 Counting the shape $\shape1201$ is slightly more involved as now we do not have a single ``central'' region in which we can iterate over points and obtain the answer, as it was the case with points $p\in\TR$ for the shape $\shape1102$.
 In order to refer to the points more easily, we use the naming of points as in Figure~\ref{fig:simple-shapes}(b), that is $q$ is the point from\TL, $r$ from\BR and $a$ and $b$ from \TR, where $a$ is to the left of $b$.
 Again we focus only on the case when points from\TR form the pattern $21$, that is $a$ is ``to the left and up'' of $b$.
 For the other case of the pattern $12$ we can horizontally reflect points in both the top regions. 
 Consider the case when the last element in the pattern $\sigma$ is the smallest (equals 1), so is the point $r$, in\BR.
Then the allowed location of $r$ depends only on the point $b$, as $r$ must be to the right of $b$, so for every point $b$ in\TR we can count points from\BR that are to the right of~$b$.
Next we proceed similarly as while counting the shape $\shape1102$, that is we iterate through points $q$ from\TL and count pairs of points $a$ and $b$ in the appropriate order with respect to $q$, where additionally points $b$ have weights.
 
The above approach can be also applied to all shapes in which~$r$ is to the left of both points $a$ and $b$, or $q$ is below both $a$ and $b$,
or $q$ is above both $a$ and $b$.
In other words, this covers all patterns in which $q$ is not between $a$ and $b$ or $r$ is not between $a$ and $b$.
 Hence it remains to consider the patterns in which both $q$ and $r$ are between $a$ and $b$.
 Notice that for the fixed relation between points $a$ and $b$ ($21$ in our case) there is exactly one such pattern $\sigma$: $3412$, see Figure~\ref{fig:simple-shapes}(c).
 To sum up, there are 9 possible patterns (3 locations for points $r$ and $q$ are possible independently) forming the considered shape $\shape1201$ and 8 of them we can count in $\Ohtilda(n)$ time.
 Moreover, the sum of counts of all the 9 patterns is exactly $|\TL|\cdot|\BR|\cdot\#_{21}(\TR)$.
 Subtracting from the total count the 8 values that we can compute efficiently gives us the number of occurrences of the last pattern.
 Thus, all patterns forming the shape $\shape 1201$ can be counted in $\Ohtilda(n)$ time.
\end{proof}

Recall that, given a division of the plane into 4 regions, an occurrence of a 4-pattern $\sigma$ is 4-partite if all its elements are in pairwise distinct regions.
In the following lemma we show that we can count 4-patterns by counting 4-partite patterns.
At a high level, every occurrence of the pattern is counted while considering the division of the plane aligned with the division of minimum base ranges containing all  coordinates of the four points.

\begin{lemma}\label{le:pattern-2-4-partite-pattern}
 Counting a 4-pattern $\sigma$ on $n$ elements can be reduced in $\Ohtilda(n)$ time to multiple instances of counting 4-partite patterns $\sigma_4$ of total size $\Ohtilda(n)$.
\end{lemma}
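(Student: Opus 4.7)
The plan is to decompose the count of occurrences of $\sigma$ according to the minimum base ranges of the $x$- and $y$-coordinates of the four participating points. For any occurrence, let $R_x^{\star}$ and $R_y^{\star}$ be the minimum base ranges containing respectively all four $x$- and $y$-coordinates. The key observation is that since $R_x^{\star}$ is \emph{minimum}, no child of $R_x^{\star}$ can contain all four $x$-coordinates; equivalently, the vertical line separating the two children of $R_x^{\star}$ splits the four points into two nonempty halves. The analogous statement holds for the horizontal line at the midpoint of $R_y^{\star}$. Consequently, with respect to the division induced by these two midpoint lines, every occurrence forms a \emph{proper} shape.

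Following this, I would iterate over all relevant pairs $(R_x,R_y) \in \TT_n \times \TT_n$ of non-leaf base ranges and, for each such pair, consider the points of $P_{R_x,R_y}$ together with the plane division by the midpoints of $R_x$ and $R_y$ (leaves are irrelevant because $|P_{R_x,R_y}| \leq 1$ whenever $R_x$ or $R_y$ is a leaf). By the above observation, an occurrence of $\sigma$ drawn from $P_{R_x,R_y}$ forms a proper shape under this division if and only if $(R_x^{\star}, R_y^{\star}) = (R_x, R_y)$, so summing the proper-shape counts over all relevant pairs yields the total number of occurrences of $\sigma$ in $\pi$ with every occurrence counted exactly once.

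For each relevant pair the count decomposes by shape. The contribution from the 4-partite shape $\shape{1}{1}{1}{1}$ is precisely an instance of counting the 4-partite pattern $\sigma_4$ on the points of $P_{R_x,R_y}$ under the midpoint division; these are the instances produced by the reduction. Every remaining proper shape is $\shape{3}{0}{0}{1}$, $\shape{2}{0}{0}{2}$, $\shape{1}{1}{0}{2}$, $\shape{1}{2}{0}{1}$ or a rotation thereof, so by Lemma~\ref{le:easy_shapes} it can be counted in $\Ohtilda(|P_{R_x,R_y}|)$ time for the fixed midpoint division. Observation~\ref{obs:relevant_pairs} gives $\sum_{(R_x,R_y)} |P_{R_x,R_y}| = \Oh(n\log^2 n)$, so the total time spent on the auxiliary shape counts is $\Ohtilda(n)$ and the total size of the produced $\sigma_4$ instances is $\Ohtilda(n)$, as required.

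The step requiring the most care is the correspondence between pairs $(R_x,R_y)$ and occurrences: one must verify that every occurrence is assigned to a unique $(R_x^{\star}, R_y^{\star})$ and, crucially, that minimality of these MBRs is exactly equivalent to the midpoint division producing a proper shape (an occurrence sitting in $P_{R_x,R_y}$ cannot have $R_x^{\star} \supsetneq R_x$, and the minimality of $R_x^{\star}$ is precisely what rules out all points landing in a single child of $R_x^{\star}$). Once this equivalence is in place, the rest of the proof is a routine accumulation of per-pair contributions using Lemma~\ref{le:easy_shapes} and Observation~\ref{obs:relevant_pairs}.
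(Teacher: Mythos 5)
Your proposal matches the paper's proof essentially step for step: you decompose occurrences by the pair of minimum base ranges, divide the plane at the midpoints of $R_x$ and $R_y$, observe that minimality of the MBR is precisely equivalent to the occurrence forming a proper shape under that division, handle the non-$\shape1111$ proper shapes with Lemma~\ref{le:easy_shapes}, emit the $\shape1111$ counts as the $\sigma_4$ instances, and invoke Observation~\ref{obs:relevant_pairs} for the $\Ohtilda(n)$ bound. This is the same argument as in the paper, with the uniqueness-of-counting step phrased as a clean iff rather than as a proof by contradiction.
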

\begin{proof}
Recall that $\MBR(S)$, the minimum base range  of a set $S\subseteq [n]$ is the minimum base range containing all elements of $S$ in the full binary tree $\TT_n$ on $n'=2^{\lceil \log n \rceil}$ leaves and $R\in \TT_n$ is a set of consecutive elements from $[n]$.
By Observation~\ref{obs:relevant_pairs} we have that there are $\Ohtilda(n)$ pairs $(R_x,R_y)\in \TT_n\times\TT_n$ for which there exists an $i\in[n]$ such that $i\in R_x$ and $\pi(i)\in R_y$.
We can retrieve all such pairs in $\Ohtilda(n)$ time by iterating through all points from $S_\pi$ and generating the set of all relevant pairs of ranges.
Recall that $P_{R_x,R_y}=\{(i,\pi(i))\in S_\pi: i\in R_x \wedge \pi(i)\in R_y\}$.
In terms of the permutation $\pi$, $R_x$ corresponds to its substring and $R_y$ restricts its values.

For every relevant pair of ranges $(R_x,R_y)$ with $P_{R_x,R_y}$ of at least 4 points inserted, we consider the plane restricted only to points from $P_{R_x,R_y}$ and divided in the following way.
As all points from $S_\pi$ have distinct coordinates and $|P_{R_x,R_y}|\geq 4$, the range $R_x$ contains at least 4 elements, so is not a leaf in $\TT_x$ and has two children $R_x^L,R_x^R$ in $\TT_x$.
The two ranges $R_x^L$ and $R_x^R$ are disjoint so we can find a vertical line that separates them, i.e. that passes through the middle of segment between the rightmost element from $R_x^L$ and the leftmost element from $R_x^R$.
Notice that this line does not pass through a point from $P_{R_x,R_y}$ as $R_x^L$ and $R_x^R$ are two consecutive ranges in $\TT_n$.
We find a horizontal line separating the range $R_y$ in the same way.
For the set of points $P_{R_x,R_y}$ and the above division of the plane, we count all shapes $\shape3001,\shape2002,\shape1102,\shape1201$ and all their possible rotations in $\Ohtilda(|P_{R_x,R_y}|)$ time, by Lemma~\ref{le:easy_shapes}.
Finally, we need to count the shape $\shape1111$, the 4-partite pattern $\sigma_4$ on the set $P_{R_x,R_y}$ and sum up all the obtained results.

Now we show that the above procedure counts every occurrence of the pattern $\sigma$ exactly once, while considering the pair of minimum base ranges for both coordinates of the points from the occurrence.
Formally, an occurrence $g$ of $\sigma$ on positions $i_1<i_2<i_3<i_4$ is counted only for the pair of ranges $(R_x,R_y)$ where $R_x=\MBR(\{i_1,i_2,i_3,i_4\})$ and $R_y=\MBR(\{\pi(i_1),\pi(i_2),\pi(i_3),\pi(i_4)\})$ and the appropriate shape, depending on the position of points from $\{(i_j,\pi(i_j)):j\in[4]\}$ with respect to the division.
Suppose the contrary, that $g$ is counted for another pair of ranges $(R_x',R_y')$ where $R_x'\ne R_x$, for $R_y'\ne R_y$ the reasoning is similar.
If $\{i_1,i_2,i_3,i_4\}\not\subseteq R_x'$, then for some $j$ the point $(i_j,\pi(i_j))$ will not be present in the considered instance.
Otherwise, from the structure of base ranges we have that $\MBR(\{i_1,i_2,i_3,i_4\})$ is fully contained in one half of $R_x'$.
In this case $g$ also will not be counted, because it forms a non-proper shape for the considered division ($\shape 2020$, $\shape 3100$ or $\shape 4000$ or their rotations) and we do not count such shapes.

As every point from $S_\pi$ is included in $\Oh(\log^2n)$ sets $P_{R_x,R_y}$, the total size of all the considered sets is $\Ohtilda(n)$ and hence counting shapes different than $\shape 1111$ takes $\Ohtilda(n)$ time.
Similarly, the total size of the instances of counting 4-partite pattern $\sigma_4$ is $\Ohtilda(n)$.
\end{proof}

By definition, trivial patterns do not form the $\shape1111$ shape, so the reduced instances have always 0 occurrences of the 4-partite pattern, which can be returned in constant time. Hence:

\begin{corollary}[cf.~{\cite[Corollary 3]{Even-ZoharL19}}]
 All trivial 4-patterns ($1234, 1243, 2134, 2143, 4321, 4312,$ $3421, 3412$) in permutations of length $n$ can be counted in $\Ohtilda(n)$ time.
\end{corollary}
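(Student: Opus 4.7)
The plan is to invoke Lemma~\ref{le:pattern-2-4-partite-pattern} directly. For any trivial 4-pattern $\sigma$, that lemma reduces counting occurrences of $\sigma$ in a permutation of length $n$ to $\Ohtilda(n)$ overhead — consisting of the MBR enumeration and repeated applications of Lemma~\ref{le:easy_shapes} to the proper shapes other than $\shape{1}{1}{1}{1}$ — plus a collection of instances of counting the corresponding 4-partite pattern $\sigma_4$ whose inputs have total size $\Ohtilda(n)$.

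The decisive observation is that, by the very definition of triviality, $\sigma$ cannot form the shape $\shape{1}{1}{1}{1}$ for any division of the plane. Equivalently, $\#_{\sigma_4}(\cdot)$ is identically $0$ for every trivial $\sigma$. Hence each of the 4-partite subinstances produced by the reduction can be resolved in $\Oh(1)$ time by returning $0$, without running any algorithm at all.

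Summing the contributions, the total running time is dominated by the $\Ohtilda(n)$ cost of the reduction itself, which in turn absorbs polylogarithmic factors from the range-query data structure of Lemma~\ref{le:interface_range_queries}. Applying this argument to each of the eight trivial patterns $1234, 1243, 2134, 2143, 4321, 4312, 3421, 3412$ in turn yields the claimed $\Ohtilda(n)$ bound. There is no genuine obstacle: the whole content is carried by Lemma~\ref{le:pattern-2-4-partite-pattern} combined with the triviality assumption, and the only thing worth double-checking is that no occurrence of $\sigma$ slips through the MBR decomposition into a non-proper shape that the reduction deliberately ignores — but this is already verified inside the correctness argument of Lemma~\ref{le:pattern-2-4-partite-pattern}, where each occurrence is shown to be counted exactly once at the level of its own minimum base range, at which level it is automatically proper.
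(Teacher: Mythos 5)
Your proposal is correct and matches the paper's argument exactly: invoke Lemma~\ref{le:pattern-2-4-partite-pattern}, note that a trivial pattern cannot form $\shape1111$ by definition, so every produced 4-partite subinstance returns $0$ in constant time, leaving only the $\Ohtilda(n)$ reduction overhead. The paper's proof is just a one-sentence version of the same observation.
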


\section{Equivalence of Counting 4-Partite Patterns and Cycles}

First we show that in fact all (non-trivial) 4-partite patterns are equivalent by a linear-time transformation of the considered set of points.
At a high level, we will show that reversing the order of points in any of the four parts of the plane (left, top, ...) allows us to slightly modify the pattern.

\begin{lemma}\label{le:all_patterns_equiv}
 Counting any non-trivial 4-partite pattern $\sigma_4$ can be reduced to counting any other non-trivial 4-partite pattern $\sigma'_4$.
\end{lemma}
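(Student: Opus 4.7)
The plan is to realize each non-trivial 4-partite pattern as a point in $\{0,1\}^4$ and exhibit four linear-time transformations of the input, one toggling each coordinate. First I would observe that a 4-partite occurrence is determined, once the four region-assignments (one point per TL, TR, BL, BR) are fixed, by four \emph{independent} binary choices: the $x$-order of TL versus BL inside the left half, the $x$-order of TR versus BR inside the right half, the $y$-order of TL versus TR inside the top half, and the $y$-order of BL versus BR inside the bottom half. Since there are exactly $24 - 8 = 16$ non-trivial patterns, this gives a bijection $\sigma \mapsto s(\sigma) \in \{0,1\}^4$; the trivial patterns are exactly those that cannot arise from any such configuration.

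Next I would introduce four half-plane reversal operations: $\rho_L$ reflects every point in the left half about a vertical axis lying strictly inside the left half (chosen to miss all points), and $\rho_R, \rho_T, \rho_B$ are analogous for the other three half-planes. Each reversal is clearly $\Oh(n)$-time and preserves region membership (a left-half point remains a left-half point, and its $y$-coordinate is unchanged), so the set of 4-partite quadruples is preserved as a set. The key computation I would carry out is that $\rho_L$ toggles precisely the first bit of $s(\cdot)$ and leaves the others untouched, with the analogous statement for the other three reversals. This is a direct verification from the definitions of the four binary orders: only the $x$-coordinates inside the left half are altered, which affects only the TL/BL horizontal comparison.

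Since $\{\rho_L, \rho_R, \rho_T, \rho_B\}$ then generates a copy of $(\mathbb{Z}/2\mathbb{Z})^4$ acting simply transitively on sign vectors, I obtain the desired reduction from counting $\sigma_4$ to counting $\sigma'_4$ by computing $s(\sigma_4) \oplus s(\sigma'_4)$ and applying the corresponding subset (of size at most $4$) of reversals in $\Oh(n)$ total time; the number of 4-partite occurrences of $\sigma'_4$ in the transformed instance equals the number of 4-partite occurrences of $\sigma_4$ in the original. The main step requiring care is the single-bit toggle property: although each reversal intuitively touches only one of the four binary orders, a short case check is needed to confirm that no other bit is changed and that region membership is indeed preserved. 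This in turn reduces to the easy requirements that the reflection axes lie strictly inside their half-planes and avoid all input points, both of which can always be arranged since only finitely many points are involved.
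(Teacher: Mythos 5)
Your proposal is correct and follows essentially the same route as the paper: the paper likewise defines the four half-plane reversal operations, observes that each swaps one designated pair of pattern values (so reversal of the left part swaps the first two elements of $\sigma$, of the top part swaps values $3$ and $4$, etc.), notes that the four operations commute and act independently, and concludes that any of the $16$ non-trivial 4-partite patterns is reachable from any other. Your framing via the bijection onto $\{0,1\}^4$ and the $(\mathbb{Z}/2\mathbb{Z})^4$ action is a slightly more explicit packaging of the same independence argument, but the transformations and the reasoning are identical to the paper's.
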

\begin{proof}
We start with showing that by reversing the points in the left part of the plane we can swap the first two elements of the pattern:

$$\#_{abcd_4}\left(\bigshape{TL}{TR}{BL}{BR}\right)=\#_{bacd_4}\left(\rev\hspace{-3px}\left(\frac{TL}{BL}\right)\hspace{-5px}\frac{|TR}{|BR}\right).$$
Formally, suppose that we need to count the 4-partite pattern $abcd$ in the plane divided as follows: $\shape{TL}{TR}{BL}{BR}$ and the leftmost and rightmost points from the left part ($TL\cup BL$) have the $x$-coordinate respectively $x_1$ and $x_2$.
We replace every point $(x,y)$ from the left part with $(x_1+x_2-x,y)$.
Then, only the horizontal order of points from the left part is reversed and any 4-partite occurrence of the pattern $abcd$ in the original instance corresponds to a 4-partite occurrence of the pattern $bacd$ in the transformed instance.
Similarly, after reversing the right part we obtain the pattern $abdc$ from $abcd$.
When we reverse the (vertical) order of the top or bottom part, we swap respectively elements $3$ and $4$ or $1$ and $2$ in the pattern.
For example, by reversing the top part, from the pattern $1324$ we obtain the pattern $1423$.

Observe that operations in any two parts of the plane are independent, we can apply any subset of them and obtain either of the 16 possible non-trivial 4-partite patterns.
See Figure~\ref{fig:all_patterns_equiv} %
with the precise description of operations between the patterns.
Thus, we can transform  in linear time any instance of counting non-trivial 4-partite pattern $\sigma_4$ to an instance of counting either of the 16 possible non-trivial 4-partite patterns.
\end{proof}

\FIGURE{h}{.55}{all_patterns_equiv}{Reductions between non-trivial patterns described in Lemma~\ref{le:all_patterns_equiv}.
Operation $a\leftrightarrow b$ ($c\leftrightarrow d$) swaps the first (second) pair of elements in the pattern and corresponds to reversing left (right) part of the plane.
Operation $1\leftrightarrow 2$ ($3\leftrightarrow 4$) swaps elements 1 and 2 (3 and 4) in the pattern and corresponds to reversing bottom (top) part of the plane.}

Hence in the following claims it suffices to consider only one non-trivial 4-partite pattern and we will focus on counting the pattern $1324_4$.
Notice that in $\Ohtilda(n)$ time we can shift any set of $n$ points in such a way that the division lines are aligned with $x$- and $y$- axes and all points have integer coordinates from $\NN=\{-n,\ldots,-1,1,\ldots,n\}$, preserving the relative order between the parts.
In the following lemma we show that counting non-trivial 4-partite patterns can be reduced to counting 4-cycles in 4-circle-layered multigraphs.
At a high level, we will group all occurrences of the pattern by the minimum base ranges of coordinates of points in each of the parts of the plane.
  
\begin{lemma}\label{le:pattern-2-cycle}
Counting a non-trivial 4-partite pattern on $n$ points can be reduced to an instance of counting 4-cycles in a 4-circle-layered multigraph on $\Ohtilda(n)$ edges with multiplicities bounded by $n$. 
\end{lemma}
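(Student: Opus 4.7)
By Lemma~\ref{le:all_patterns_equiv} it suffices to build the reduction for the single pattern $1324_4$. A 4-partite occurrence of $1324_4$ consists of four points $p_1\in BL$, $p_2\in TL$, $p_3\in BR$, $p_4\in TR$, and the only comparisons not already forced by the region memberships are
\begin{equation*}
x(p_1)<x(p_2),\quad y(p_2)<y(p_4),\quad x(p_3)<x(p_4),\quad y(p_1)<y(p_3).
\end{equation*}
The key observation is that any such strict inequality $x(p)<x(q)$ is \emph{uniquely} witnessed by the base range $R=\MBR(\{x(p),x(q)\})\in\TT_n$, namely the unique node of $\TT_n$ whose left child $R.L$ contains $x(p)$ and whose right child $R.R$ contains $x(q)$; the analogous statement holds for $y$-comparisons. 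The plan is to turn the four witnessing base ranges of an occurrence into the four vertices of a 4-cycle.

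Concretely, I would construct a 4-circle-layered multigraph $G$ whose four layers are tagged copies of the nodes of $\TT_n$: $V_0$ carries the witnesses $R_x^L$ of $x(p_1)<x(p_2)$, $V_1$ the witnesses $R_y^T$ of $y(p_2)<y(p_4)$, $V_2$ the witnesses $R_x^R$ of $x(p_3)<x(p_4)$, and $V_3$ the witnesses $R_y^B$ of $y(p_1)<y(p_3)$. Each of the four edge-sets aggregates admissible points from one region, with the following multiplicities:
\begin{itemize}
\item $R_x^L\to R_y^T$ with $|\{p_2\in TL:x(p_2)\in R_x^L.R,\ y(p_2)\in R_y^T.L\}|$,
\item $R_y^T\to R_x^R$ with $|\{p_4\in TR:y(p_4)\in R_y^T.R,\ x(p_4)\in R_x^R.R\}|$,
\item $R_x^R\to R_y^B$ with $|\{p_3\in BR:x(p_3)\in R_x^R.L,\ y(p_3)\in R_y^B.R\}|$,
\item $R_y^B\to R_x^L$ with $|\{p_1\in BL:y(p_1)\in R_y^B.L,\ x(p_1)\in R_x^L.L\}|$.
\end{itemize}

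Because the four tagged layers are pairwise disjoint, every 4-cycle in $G$ visits exactly one node of each layer in the canonical cyclic order and hence picks a quadruple of witnesses $(R_x^L,R_y^T,R_x^R,R_y^B)$. The product of multiplicities along the cycle counts precisely the quadruples $(p_1,p_2,p_3,p_4)$ whose coordinates satisfy the sidedness constraints imposed by these four witnesses, which are exactly the 4-partite occurrences of $1324_4$ whose four required comparisons are witnessed by these LCA's. Uniqueness of the witnesses implies that every occurrence is charged to exactly one 4-cycle of $G$, so the weighted 4-cycle count in $G$ equals the number of 4-partite occurrences of $1324_4$.

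It remains to verify the size bounds. Each point $p$ contributes to an edge of $G$ only through the $\Oh(\log n)$ base ranges whose appropriate-side child contains $x(p)$, paired with the $\Oh(\log n)$ such ranges for $y(p)$, so $p$ participates in $\Oh(\log^2 n)$ edges; summed over all $n$ points this yields $\Ohtilda(n)$ edges, and the graph can be built in $\Ohtilda(n)$ time by iterating over the at most $\log n$ relevant ancestors in each coordinate tree. Each multiplicity is trivially at most the size of a single region, hence at most $n$. The delicate step I expect is the uniqueness argument---that the four LCA witnesses give a bijection between 4-partite occurrences of $1324_4$ and weighted 4-cycles of $G$---but this ultimately reduces to the standard uniqueness of the LCA in $\TT_n$, with everything else being a routine charging computation.
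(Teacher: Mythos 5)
Your proposal is correct and matches the paper's construction essentially exactly: the paper also builds one layer per half-plane, indexes layer nodes by base ranges of the coordinate trees, adds one edge per (point, ancestor-pair) with the point required to lie in the two specific child-halves dictated by $1324$, and relies on the uniqueness of $\MBR$-witnesses to get a bijection between 4-partite occurrences and weighted 4-cycles. The only presentational difference is that you organize the edge sets around the four non-automatic comparisons, whereas the paper organizes the description around processing each region's points in turn.
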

\begin{proof}
For a permutation $\pi$ and division of the plane with points $S_\pi$ we need to construct a 4-circle-layered multigraph in such a way that the number of 4-cycles in the graph gives us the number of occurrences of the pattern.
Recall that we can operate on points from $\NN^2$ and the division of the plane along the $x$- and $y$-axes.
We consider four full binary trees $\TT_n^L,\TT_n^R,\TT_n^B,\TT_n^T$ for each part of the plane separately.
For each base range in the trees we create a separate node in the new 4-partite graph.

\FIGURE{h}{.9}{patterns-2-cycles}{
We consider four full binary trees $\TT_n^L,\TT_n^R,\TT_n^B,\TT_n^T$ for each part of the plane separately and group occurrences of patterns by the $\MBR$s of coordinates in each part of the plane.
Points from appropriate halves of MBRs from each two consecutive parts add a new edge to the multigraph.
}

Now we process all points from $S_\pi$ grouped by their region.
Suppose we process a point $(x,y)\in S_\pi$ from the top-right region.
We iterate over all pairs $(R_R,R_T)\in \TT_n^R\times\TT_n^T$ of base ranges such that $x\in R_R$ and $y\in R_T$ and the ranges are not singletons (leaves in $\TT_n$), so contain at least two elements from $[n]$.
Recall that we focus on the pattern $1324$, because now the choice of the particular pattern is crucial in the following condition.
We add edge $(R_T,R_R)$ to the 4-circle-layered multigraph if $x$ is in the right half of $R_R$ and $y$ is in the top half of $R_T$.
This means that the point $(x,y)$ can be a part of an occurrence of the $1324$ pattern in which $R_T$ is the $\MBR$ of $y$-coordinates of the top points and $R_R$ is the $\MBR$ of $x$-coordinates of the right points.
See Figure~\ref{fig:patterns-2-cycles}.
We proceed similarly for the remaining three regions, modifying only the condition for including an edge, based on the position of elements of the pattern $1324$ inside the considered region.

If an edge is inserted more than once, we simply increment its multiplicity, which can be stored e.g. in a balanced binary search tree.
As every point from $S_\pi$ adds at most $\Oh(\log^2n)$ edges, in total there are $\Oh(n\log^2n)=\Ohtilda(n)$ edges in the graph.
Clearly, the constructed directed multigraph is 4-partite as we connect nodes from $\TT_n^T$ to the nodes from $\TT_n^R$, from $\TT_n^R$ to $\TT_n^B$ etc.
Finally, observe that the multiplicity of an edge connecting nodes corresponding to ranges $R$ and $R'$ is the number of points in the intersection of their appropriate halves.
Hence multiplicities of edges in the graph are bounded by~$n$.
\end{proof}

The reduction from 4-circle-layered multigraphs to 4-circle-layered simple graphs was shown in Lemma~\ref{le:dir_multi_to_simple}.
Finally, to conclude the equivalence between counting 4-partite patterns and cycles in 4-circle-layered graphs, we describe the reduction from counting 4-cycles in 4-circle-layered graphs to counting non-trivial patterns.
The idea is to first embed the graph in the plane so that every group $V_i$ of nodes corresponds to a half-plane and edges to points in the plane.
Then every 4-cycle corresponds to a rectangle with all corners in distinct quadrants.
Now we appropriately tilt each quadrant, so that every rectangle corresponds to an occurrence of the pattern $1324_4$.
However, this change introduces many more occurrences of the pattern as now we have slightly weaker constraints on the relative position of points.
This is corrected by subtracting the surplus by applying the inclusion-exclusion principle for different ways of tilting the quadrants.%

We remark that our approach is similar to that of Berendsohn et al.~\cite[Section 5]{BerendsohnKM19}.
They showed a reduction from Partitioned Subgraph Isomorphism to counting short patterns in permutations by embedding the input graph in the plane with appropriate tilting and using the inclusion-exclusion principle.
However, while their reduction works for arbitrary subgraphs of size~$k$, this comes at the cost of increasing the size of the
permutation pattern to $7k+1$, which in our case would result in a permutation pattern on 29 elements, hence
not giving us the desired tight connection between counting 4-cycles and 4-patterns.

\begin{lemma}\label{le:cycle-2-pattern}
 Counting 4-cycles in a 4-circle-layered simple graph on $m$ edges can be reduced in $\Ohtilda(m)$ time to a constant number of instances of counting a non-trivial pattern in a permutation of length~$m$.
\end{lemma}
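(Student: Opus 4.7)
My plan is to embed the 4-circle-layered graph $G$ in the plane so that its 4-cycles correspond to axis-aligned rectangles with one corner in each quadrant, then tilt the quadrants slightly so that each rectangle becomes an occurrence of the 4-partite pattern $1324_4$, and finally recover the number of genuine rectangles via an inclusion-exclusion over the tilt directions. Fix injective labelings $a: V_0 \to \{-N, \ldots, -1\}$, $c: V_2 \to \{1, \ldots, N\}$, $b: V_1 \to \{1, \ldots, N\}$, and $d: V_3 \to \{-N, \ldots, -1\}$, and represent each edge of $G$ as a single point whose coordinates encode its two endpoints: an edge $(u,v) \in E_0$ becomes $(a(u), b(v))$ in TL, an edge $(v,w) \in E_1$ becomes $(c(w), b(v))$ in TR, an edge $(w,x) \in E_2$ becomes $(c(w), d(x))$ in BR, and an edge $(x,u) \in E_3$ becomes $(a(u), d(x))$ in BL. A 4-cycle $u_0 u_1 u_2 u_3$ with $u_i \in V_i$ then yields four points, one per quadrant, forming a rectangle: BL and TL share the $x$-coordinate $a(u_0)$, TL and TR share the $y$-coordinate $b(u_1)$, TR and BR share the $x$-coordinate $c(u_2)$, and BR and BL share the $y$-coordinate $d(u_3)$.

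Under the identification BL$=1$, TL$=2$, BR$=3$, TR$=4$ of the four quadrants with the four positions of the pattern $1324_4$, the six strict inequalities required for a $1324_4$ occurrence split into four \emph{shared-coordinate} constraints, namely $x(\text{BL})<x(\text{TL})$, $x(\text{BR})<x(\text{TR})$, $y(\text{BL})<y(\text{BR})$, and $y(\text{TL})<y(\text{TR})$, and two \emph{automatic} ones, $x(\text{TL})<x(\text{BR})$ and $y(\text{BR})<y(\text{TL})$, the latter two holding in every quadruple thanks to the half-plane separation of $a$ from $c$ and of $b$ from $d$. For every sign vector $s = (s_1, s_2, s_3, s_4) \in \{+,-\}^4$ I build a permutation $\pi_s$ of length $m$ by ranking the $m$ points: the $x$-coordinates of the left points (BL and TL together) are assigned primarily by $a$-value, and between a BL and a TL point with equal $a$-value the TL point is placed \emph{after} the BL point when $s_1 = +$ and \emph{before} it when $s_1 = -$; remaining ties within a single quadrant are broken arbitrarily. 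The analogous procedure is used on the right (using $s_3$ for the $V_2$-share), on the top (using $s_2$ for the $V_1$-share), and on the bottom (using $s_4$ for the $V_3$-share). Since $a, b, c, d$ are injective, equality of the primary key on a shared-coordinate axis holds iff the two relevant edges share the corresponding vertex of $G$, so the number $N(s)$ of 4-partite $1324_4$ occurrences in $\pi_s$ equals the number of quadruples of edges (one per quadrant) for which the $i$-th shared-coordinate comparison is non-strict ($\leq$) when $s_i = +$ and strict ($<$) when $s_i = -$.

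The identity $\mathbf{1}[{\leq_i}] - \mathbf{1}[{<_i}] = \mathbf{1}[{=_i}]$, applied axis by axis, then yields
\[
\sum_{s \in \{+,-\}^4} \Bigl(\prod_{i=1}^4 \epsilon(s_i)\Bigr) \cdot N(s) \;=\; \#\{\text{quadruples with all four shared-coord equalities}\} \;=\; \#C_4(G),
\]
where $\epsilon(+) = 1$ and $\epsilon(-) = -1$. Thus $\#C_4(G)$ equals a $\pm 1$ combination of $16$ instances of counting the 4-partite pattern $1324_4$, each on a permutation of length $m$; invoking Lemma~\ref{le:4-partite-pattern-2-pattern} turns each of these into $O(1)$ instances of counting the plain non-trivial pattern $1324$, giving the required constant-size reduction in $\Ohtilda(m)$ total time.

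The only subtle step is verifying that the tie-breaking rules on the four shared axes are truly independent, so that the identity $\mathbf{1}[\leq] - \mathbf{1}[<] = \mathbf{1}[=]$ survives the product across all four axes; this relies precisely on the separation of $a, c$ (and of $b, d$) into opposite half-planes, which guarantees that ties on any one axis can occur only between two specific adjacent quadrants and thus carry no interaction with the other three axes. Once this decoupling is established the remainder of the argument --- building the $16$ permutations by sorting, calling the pattern counter, and combining the results with signs $\epsilon(s_i)$ --- is mechanical and runs in $\Ohtilda(m)$ time.
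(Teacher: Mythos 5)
Your proof is correct and takes essentially the same route as the paper: the same clockwise embedding of $V_0,\ldots,V_3$ onto the four coordinate half-planes, the same identification of the four quadrants with the positions of the pattern $1324_4$, and the same 16-term inclusion-exclusion that isolates the quadruples whose four shared coordinates all coincide. The only presentational difference is that where the paper perturbs the quadrants geometrically by $\pm\frac15$ (and a secondary $\frac{y}{10n},\frac{x}{10n}$ shift to break within-quadrant ties), you instead phrase the same operation as discrete tie-breaking rules when ranking coordinates into a permutation --- an equivalent and arguably cleaner formalization, with the same correct observation that within-quadrant ties are immaterial for counting $4$-partite occurrences.
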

\begin{proof}
Given a 4-circle-layered graph $G=(V_0\dot\cup V_1\dot\cup V_2\dot\cup V_3,E)$, where $E\subseteq \bigcup_i V_i\times V_{i+1\bmod 4}$, we will embed it in the plane and construct a constant number of instances of counting a non-trivial 4-partite pattern.
As Lemma~\ref{le:all_patterns_equiv} guarantees that all such patterns are equivalent, we can focus only on the pattern $1324$.

Every half-plane corresponds to a part of the graph in the clockwise order:
negative $x$-coordinates correspond to nodes from $V_0$, positive $y$-coordinates correspond to nodes from $V_1$, positive $x$-coordinates correspond to nodes from $V_2$ and negative $y$-coordinates correspond to nodes from $V_3$.
The order of points in every half-plane projected on the appropriate axis is arbitrary, so we can use any injective mapping from $V_0$ and $V_3$ to $\{-n,\ldots,-1\}$ and from $V_1$ and $V_2$ to $\{1,\ldots,n\}$.
Next, every edge in the graph corresponds to a point in the plane, so we get a subset of $m$ points from $\NN^2$.
Then every 4-cycle in $G$ corresponds to a rectangle with corners in points in distinct quadrants.

Now we would like to transform the constructed set of points into a number of point sets $S_\pi$ for some permutations $\pi$.
Intuitively, every 4-cycle from $G$ will correspond to an occurrence of the pattern $1324_4$.
Notice that there might be many edges incident to a node, so in the beginning some points have equal $x$- or $y$-coordinate, which we need to avoid.
At first we will guarantee that no two points from distinct quadrants have equal $x$- or $y$-coordinates, which is already sufficient to be able to define an occurrence of the 4-partite pattern $1324_4$.
In the end we will show that we can slightly shift all points preserving relationships between points from distinct quadrants and additionally ensuring uniqueness of coordinates inside each quadrant.
Consider the following transformation of the plane:

$$\bigshape{TL}{TR}{BL}{BR} \rightarrow \bigshape{TL+(\frac15,0)}{TR+(0,\frac15)}{BL+(0,-\frac15)}{BR+(-\frac15,0)}$$
where by adding a vector to the region we denote shifting all points from the region by the vector.
Informally, we shift $TL$ slightly right, $TR$ slightly up etc, see Figure~\ref{fig:tilting}(a).
Observe that now every 4-cycle from $G$ corresponds to an occurrence of $1324_4$ (see Figure~\ref{fig:tilting}(b) and its explanation in the caption), but there are also many more other occurrences of the pattern, which do not correspond to a cycle from~$G$.
More precisely, every occurrence of the pattern $1324_4$ corresponds to 4 edges from~$G$, but we cannot ensure that they form a cycle, or equivalently, that every two consecutive edges share an endpoint, see Figure~\ref{fig:tilting}(c).

\FIGURE{h}{1}{tilting}{(a) Slightly shifting all points guarantees that points from distinct quadrants do not share a coordinate.
(b) Every cycle from the graph corresponds to an occurrence of $1324_4$.
We mark the area of the ``small shifts'' between the dashed lines, so i.e. all points that initially had $y$-coordinate equal to $t$ now are between the two horizonatal dashed lines surrounding $t$.
(c) Some occurrences of $1324_4$ do not correspond to a cycle in $G$, as the consecutive edges do not share endpoints.
Points corresponding to consecutive edges that share an endpoint are connected with a solid line (i.e. $(\ell,b)$ and $(\ell,t_1)$) and with a dashed line if they do not share (i.e. $(\ell,t_1)$ and $(r,t_2)$).}

In particular, after the above transformation, in every occurrence of $1324_4$ the two points from the left half-plane: $(x,y-\frac15)\in BL$  and $(x'+\frac15,y')\in TL$ satisfy that $x'+\frac15\geq x$, but the two edges corresponding to these points share an endpoint only when $x'= x$.
On the other hand, if we slightly modify the above transformation and set $TL \rightarrow TL+(-\frac 15,0)$, we obtain that $x'-\frac15\geq x$, so $x'>x$ and certainly the two edges cannot share an endpoint.
Now we use this property for all half-planes and plug the modified transformations into the inclusion-exclusion principle:
$$\#_{C_4}(G)=\sum_{S\subseteq\{L,R,T,B\}} (-1)^{|S|}\#_{1324_4}\left(\bigshape
{\transform{TL}{\delta_L(S)}{0}}
{\transform{TR}{0}{\delta_T(S)}}
{\transform{BL}{0}{-\delta_B(S)}}
{\transform{BR}{-\delta_R(S)}{0}} \right) $$
where $\delta_X(S)=\frac15\text{ if }X\in S$ or $-\frac15$ otherwise.
Finally, to ensure that no two points in a single quadrant have equal $x$- or $y$-coordinate we first transform every point $(x,y)$ into $(x+\frac{y}{10n},y+\frac{x}{10n})$ and then shift accordingly.
For instance, a point $(x,y)\in TL$ is transformed to $(x+\frac{y}{10n}+\delta_L(S),y+\frac{x}{10n})$.
Notice that the choice of lengths of the shifts guarantees that no two points have the same $x$- or $y$-
coordinate and the new coordinates are within $[-\frac{3}{10},\frac{3}{10}]\times [-\frac{3}{10},\frac{3}{10}]$ square comparing to the original location of points.
In the obtained instances all points have non-integer coordinates, but we can normalize them into $\NN^2$ preserving the relative order between the points.
\end{proof}

\printbibliography
\end{document}